\newif\ifdoublespacing
\newtheorem{proposition}{Proposition}
\newtheorem{corollary}{Corollary}
\renewcommand{\d}{\text{d}}
\newcommand{\Beta}{{B}}
\newcommand{\Cov}{{\mathbb C}\text{ov}}
\newcommand{\Esp}{\xspace\mathbb E}
\newcommand{\Ncal}{\mathcal{N}}
\newcommand{\off}{\textrm{off}}
\newcommand{\Pcal}{\mathcal{P}}
\newcommand{\Qcal}{\mathcal{Q}}
\newcommand{\Sbar}{S_+}
\renewcommand{\sp}{\text{struct}}
\newcommand{\stab}{\text{stab}}
\newcommand{\Var}{\mathbb V}
\newcommand{\mb}[1]{\mathbf{#1}}
\newcommand{\bs}[1]{\boldsymbol #1}
\newcommand{\matr}[1]{{#1}}
\newcommand{\trans}{\intercal}
\newcommand{\transpose}[1]{\matr{#1}^\trans}
\DeclareMathOperator*{\minimize}{minimize}
\DeclareMathOperator{\tr}{tr}
\DeclareMathOperator{\diag}{diag} 
\DeclareMathOperator{\mtov}{vec} 
\newcommand{\kro}{\otimes} 
\newcommand{\had}{\odot}   
\newcommand{\nodesize}{2em}
\newcommand{\edgeunit}{2.5*\nodesize}
\tikzstyle{hidden}=[draw, circle, fill=gray!50, minimum width=\nodesize, inner sep=0]
\tikzstyle{observed}=[draw, circle, minimum width=\nodesize, inner sep=0]
\tikzstyle{eliminated}=[draw, circle, minimum width=\nodesize, color=gray!50, inner sep=0]
\tikzstyle{empty}=[]
\tikzstyle{arrow}=[->, >=latex, line width=1pt]
\tikzstyle{edge}=[-, line width=1pt]
\tikzstyle{dashedarrow}=[->, >=latex, dashed, line width=1pt]
\tikzstyle{lightarrow}=[->, >=latex, line width=1pt, fill=gray!50, color=gray!50]
\newcommand{\PLN}{PLN\xspace}
\newcommand{\PLNnetwork}{PLN-network\xspace}
\title{Variational inference for sparse network reconstruction from count data}
\author{Julien Chiquet, Mahendra Mariadassou, Stéphane Robin} 
\date{UMR MIA-Paris, AgroParisTech,  INRA, Universit\'e Paris-Saclay, 75005 
Paris, France \\
MaIAGE, INRA, Université Paris-Saclay, 78350, Jouy-en-Josas, France}
\begin{document}

\maketitle

\begin{abstract}%
  \noindent In multivariate statistics, the question of finding direct
  interactions can be formulated as a problem of network inference -
  or network reconstruction - for which the Gaussian graphical model
  (GGM) provides a canonical framework. Unfortunately, the Gaussian
  assumption does not apply to count data which are encountered in
  domains such as genomics, social sciences or ecology.

  To circumvent this limitation, state-of-the-art approaches use
  two-step strategies that first transform counts to pseudo Gaussian
  observations and then apply a (partial) correlation-based approach
  from the abundant literature of GGM inference.  We adopt a different
  stance by relying on a latent model where we directly model counts
  by means of Poisson distributions that are conditional to latent
  (hidden) Gaussian correlated variables.  In this multivariate
  Poisson lognormal-model, the dependency structure is completely
  captured by the latent layer.  This parametric model enables to
  account for the effects of covariates on the counts.
 
  To perform network inference, we add a sparsity inducing constraint
  on the inverse covariance matrix of the latent Gaussian
  vector. Unlike the usual Gaussian setting, the penalized likelihood
  is generally not tractable, and we resort instead to a variational
  approach for approximate likelihood maximization. The corresponding
  optimization problem is solved by alternating a gradient ascent on
  the variational parameters and a graphical-Lasso step on the
  covariance matrix.

  We show that our approach is highly competitive with the existing
  methods on simulation inspired from microbiological data. We then
  illustrate on three various data sets how accounting for sampling
  efforts via offsets and integrating external covariates (which is
  mostly never done in the existing literature) drastically changes
  the topology of the inferred network.
  \\

  \noindent \textbf{Keywords:} multivariate count data $\cdot$
  Poisson-lognormal distribution $\cdot$ Gaussian graphical models
  $\cdot$ variational inference $\cdot$ sparsity $\cdot$
  graphical-Lasso  
\end{abstract}

\section{Introduction} \label{sec:intro}

Networks are the \emph{de facto} mathematical object used to model and
represent pairwise interactions between entities of interest. Examples
include air traffic between airports, social interactions between
participants of a conference, trophic relationships between species,
gene regulations, ecological interactions between microbial species,
etc. However, most networks are not observed directly but must be
reconstructed first from indirect node-level observations using some
kind of statistical procedure. In this perspective, graphical models
are popular among statisticians to explore relationships between nodes
in graphs since undirected graphical models \citep{Lau96}, also called
Markov random fields \citep{Harris2016}, are a convenient class of
models with sound theoretical groundings for capturing conditional
dependence relationships between nodes: $i$ and $j$ are linked in
$\mb{G}$ ($i \sim j$) if and only if features $i$ and $j$ are
conditionally dependent given all the others.  Powerful inference
procedures exist for Gaussian Graphical Models (GGM) for continuous
data and Ising or voter models for binary data and is still a very
active field of research. An informative and non-exhaustive set of
seminal papers in this field may include
\citet{yuan2007,banerjee2008,ravikumar2010high,Meinshausen2006,cai2011constrained,khare2015convex}. On
the application side, GGM have been successfully used in many fields,
most notably biology, to understand complex genetic regulations
\citep{Moignard2015, Fiers2018}, to identify direct contacts between
protein subunits \citep{Drew2017} or to identify functional pathways
associated to a disease \citep{Yu2015}. Unfortunately, we lack such
powerful estimation procedures for non-Gaussian data, especially count
data, which is the focus of this work.

Count data arise naturally in fields such as ecology (species count at
a given site), transcriptomics (copy number of a transcript in a
tissue) and quite broadly, all subfields of biology based on molecular
markers and high-througput sequencing.  They also arise in political
sciences (voting outcomes), tourism management (number of visitors to
sightseeing spots), to cite only a few.  By analogy to the Gaussian
graphical setting, many efforts have been devoted throughout the years
to develop multivariate Poisson distribution in order to model
dependencies \citep[see][for a review]{IYA16}, since Poisson is the
natural probability distribution for modeling counts.  Unfortunately,
there is no satisfying Poisson counterpart to the multivariate
Gaussian. \citet{Besag1974} introduced Poisson Graphical Model (PGM)
and proved that PGM can only capture negative dependencies to ensure
consistency of the joint distribution. \citet{yang2012graphical}
proposed variants of \citet{Besag1974}'s PGM but none of them was
completely satisfying. Usually, they failed to have either marginal or
conditional Poisson distributions. \citet{allen2012log} also proposed
a local PGM satisfying the local Markov property but do not have a
joint consistent graphical model. In the same vein,
\citet{Gallopin2013} considered log-normal models. In both methods,
authors estimate the neighborhood of a node by performing a
generalized linear regression \emph{à la}
\citet{Meinshausen2006}. Another common yet more recent approach --
used for microbial ecology in SPIEC-EASI \citep{Kurtz2015} and BAnoCC
\citep{Schwager2017} -- addresses the problem differently, by $i)$
replacing counts with (regularized) frequencies, and $ii)$ taking
their log-ratios before $iii)$ moving back to the GGM framework. A
positive side effect of this transformation is to remedy the issue
referred to as the \emph{compositionality problem}: counts can only be
compared to each other within a sample but not across samples as they
depend on a sample-specific size-factor, which may induce spurious
negative correlations of its own. This problem is particularly acute
in molecular biology where counts are constrained by the sampling
effort (\emph{e.g.} sequencing depth). Note, however, that the
count-to-frequency transformations prevents one from integrating
heterogeneous sources of count data (\emph{e.g.} bacteria and fungi in
ecology, gene expression and methylation levels in functional
genomics) and to find interactions between nodes of different natures,
although they are known to be important in certain contexts
\citep{Lima-Mendez2015}. Finally, a common shortcoming of the two
families of approaches (PGM and preprocessed GGM), at least in their
vanilla formulation, is that they do not offer a systematic way to
control for covariates and confounding factors: differences in mean
counts induced by differences in a structuring factor (\emph{e.g.}
nutrient availibility in ecology) may be mistakenly inferred as
interactions \citep{Vacher2016}.

In this paper, we tackle the limitations mentioned above by recourse
to a hierachical Poisson log-normal (PLN) model with a latent Gaussian
layer and an observed Poisson layer. We use the GGM formulation to
model direct interactions between features in the Gaussian layer and
include covariates in the Poisson layer to control for confounding
factors, as was done by several authors in different contexts
\citep[see][]{ChG95,PaL07,MKD08}.  Finally, we address the
compositionality problem by using offsets \citep{Agr96} and can thus
reconstruct interaction networks on heterogeneous groups of features
observed in the same samples but using different techniques. The model
is similar to the one introduced in \citet{Biswas2016} but the
inference is significantly different and has a deeper statistical
grounding. In particular, and unlike \citet{Biswas2016}, we consider
the latent variable as a random variable and not as a parameter. We
therefore use a variational inference procedure to estimate the
interaction network. The resulting optimization procedure is more
complex but accounts for the uncertainty of the latent variables.

The manuscript is organized as follows: Section~\ref{sec:model}
introduces notation and the PLN model. Section~\ref{sec:inference}
presents the variational approximation and inference
procedure. Section~\ref{sec:simul} presents the results of a
simulation study and Section~\ref{sec:appli} shows networks
reconstructed from three real world count datasets: two originating
from community ecology and one from voting outcomes in a recent French
election.

\section{A Graphical Model for Multivariate Count Data} \label{sec:model}


\subsection{Multivariate Poisson Log-Normal (\PLN) Model}

We first remind the definition of the multivariate \PLN model
\citep{AiH89}. The model involves parameters
$\mb{\mu} = (\mu_j)_{1 \leq j, \leq p}$ and
$\mb{\Sigma} = (\sigma_{jk})_{1 \leq j, k \leq p}$. An \textit{i.i.d.}
\PLN sample is drawn as follows: for each observed $p$-dimensional
count vector $Y_i$ ($1 \leq i \leq n$), a Gaussian latent
(\textit{i.e.} hidden) $p$-dimensional vector $Z_i$ is drawn and the
coordinates of $Y_i$ are sampled independently from a Poisson
distribution, conditionally on $Z_i$:
\begin{equation} \label{eq:PLNmodel}
  \begin{array}{r@{\hspace{4ex}}l}
    (Z_i)_{1 \leq i \leq n}\text{ iid},  
    &  
    Z_i \sim \Ncal(\bs{0}_p, \mb{\Sigma}), \\[2ex]
    (Y_{ij})_{1 \leq i \leq n, 1 \leq j \leq p} \text{ indep.} | \; Z_{ij}, 
    & 
    Y_{ij} \; | \; Z_{ij} \sim \Pcal(\exp\{\mu_j + Z_{ij}\}).
\end{array}
\end{equation}
In the following, all count vectors $Y_i$ are gathered into the $n \times p$ 
matrix $\mb{Y} \triangleq (Y_{ij})_{1 \leq i \leq n, 1 \leq j \leq p}$. The $n 
\times p$ matrix $\mb{Z}$ is defined as $\mb{Z} \triangleq 
(Z_{ij})_{1 \leq i \leq n, 1 \leq j \leq p}$ in the same way.
The \PLN distribution displays several interesting properties such as over-dispersion with respect to the Poisson distribution:
\begin{equation}\label{eq:PLN_properties}
\Esp(Y_{ij}) = e^{\mu_j + \sigma_{jj}/2}, \qquad \Var(Y_{ij}) = \Esp(Y_{ij}) + (e^{\sigma_{jj}} - 1) \Esp(Y_{ij})^2 \geq \Esp(Y_{ij})
\end{equation}
and arbitrary sign for the covariance between the coordinates:
\begin{displaymath}
  \text{for } j\neq k, \quad \Cov(Y_{ij}, Y_{ik}) = (e^{\sigma_{jk}} - 1) \Esp(Y_{ij}) \Esp(Y_{ik}),
\end{displaymath}
that is: $\Cov(Y_{ij}, Y_{ik})$ has the same sign as $\Cov(Z_{ij}, Z_{ik}) = \sigma_{jk}$. 

\paragraph{Introducing covariates.}
Interestingly, covariates can be easily introduced in the \PLN model,
replacing the constant vector $\mu$ with a regression
term. Furthermore, in many applications dealing with counts, it is
desirable to introduce an offset term to account for some known effect
such as the sampling effort. Denote
$x_i = (x_{i\ell})_{1 \leq \ell \leq d}$ the vector of covariates for
observation $i$ and
$\mb{\Beta} = (\beta_{\ell j})_{1 \leq \ell \leq d, 1 \leq j \leq p}$
the corresponding matrix of regression coefficients. Also denote by 
$o_{ij}$ the offset term for count $Y_{ij}$. Both can be accounted for
by modifying the distribution of the count $Y_{ij}$ given in
\eqref{eq:PLNmodel} into
\begin{equation}
  \label{eq:PLN_covariates}
  Y_{ij} \; | \; Z_{ij} \sim \Pcal\left(\exp\{o_{ij} + x_i^\trans \beta_j + Z_{ij}\}\right).
\end{equation}
We further define the offset matrix
$\mb{O} = (o_{ij})_{1 \leq i \leq n, 1 \leq j \leq p}$ and the design
matrix $\mb{X} = (x_{i\ell})_{1 \leq i \leq n, 1 \leq \ell \leq d}$.

\begin{center}
  \S
\end{center}

The \PLN model is actually quite general and can be used for many
purposes. \cite{CMR17} show how probabilistic PCA can be casted in
this framework to perform dimension reduction. The following section
shows how graphical models fit within the PLN model.

\subsection{The \PLNnetwork graphical model}

In this work, we are interested in modeling the dependency structure that 
relates the coordinates of the count vectors $Y_i$. As mentioned in Section 
\ref{sec:intro}, no generic multivariate model is available for counts and 
existing models often impose undesired constraints on the dependency structure. 
To circumvent this issue, we use the \PLN model to push the structure inference 
problem to the latent space and to infer the dependency structure relating the 
coordinates of the latent vector $Z_i$. 

We use the framework of graphical models \citep{Lau96} to model this
dependency structure. Intuitively, the graph encodes the conditional
dependence structure between random variables. Formally, $Z_i$ and
$Z_j$ are connected in the graph if and only $Z_i$ and $Z_j$ are
independent conditionally on all other variables, that is:
$Z_i \not\perp Z_j \mid Z_{\setminus\{i,j\}}$. Now, because the
$Z_i$'s are jointly Gaussian, so is $(Z_i, Z_j | Z_{-\{i,j\}})$. In
particular, the partial correlation between $Z_i$ and $Z_j$ given the
$(Z_k)_{k \neq i,j}$ is
$\rho_{ij} = - \Omega_{ij} / \sqrt{\Omega_{ii}\Omega_{jj}}$ where
$\mb{\Omega} \triangleq \mb{\Sigma}^{-1}$ is the precision
matrix. Therefore $Z_i$ and $Z_j$ are conditionally independent if and
only if $\Omega_{ij} = 0$ and the structure inference problem reduces
to the determination of the support of $\mb{\Omega}$. This precision
matrix is assumed to be sparse.  In this perspective, it is critical
to account for covariates
that may have an effect on the observed counts to avoid spurious edges
in the inferred graphical model \citep[see e.g.][and
discussions]{CPW12}. As a consequence, in this paper, we adopt the
following parametrization of the \PLN model:
\begin{equation} \label{eq:PLNnetwork}
  \begin{array}{r@{\hspace{4ex}}l}
    (Z_i)_{1 \leq i \leq n}\text{ iid},  & Z_i \sim \Ncal(\bs{0}_p, \mb{\Omega}^{-1}), 
    \qquad \mb{\Omega} \text{ sparse}, \\[2ex]
  (Y_{ij})_{1 \leq i \leq n, 1 \leq j \leq p} \; \text{ indep.} \; | \; Z_{ij}, 
  &
  Y_{ij} \; | \; Z_{ij} \sim \Pcal(\exp\{o_{ij} + x_i^\trans \beta_j + Z_{ij}\}),
\end{array}
\end{equation}
which separates the structure parameter $\mb{\Omega}$ from the other
effect parameters $\mb{O}$ and $\mb{\Beta}$. We emphasize that pushing
the structure inference problem from the observed space of the $Y_i$
to the latent space of the $Z_i$ has some consequences. Indeed, it can
be easily checked that, if the graphical model of the $Z_i$ is
connected (that is, if no subset of latent coordinates is separated
from the rest), then all count coordinates are correlated, so that the
graphical model of the marginal distribution of the $Y_i$ is fully
connected (see Figure \ref{fig:graphicalmodel}, top). Only a
separation in the latent space will result in a separation in the
observed space (see Figure \ref{fig:graphicalmodel}, bottom). The
inference framework we propose must be therefore interpreted as
follows: all the dependency is captured in the latent space and the
lower Poisson layer in \eqref{eq:PLNnetwork} models an independent
measurement noise.
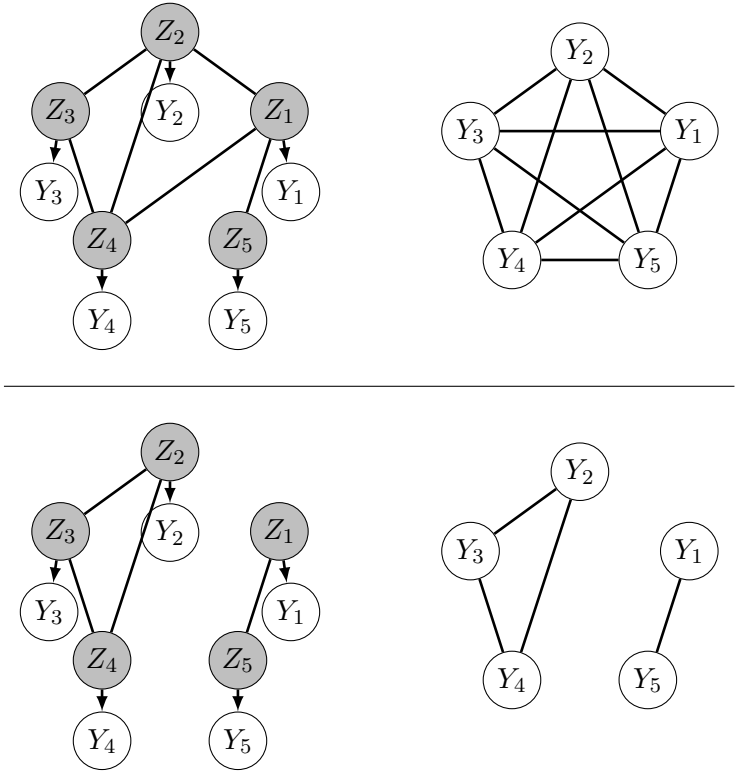
\begin{figure}[ht]
  \begin{center}
\begin{tabular}{ccc}
  \begin{tikzpicture}[scale=.8]
    \node[hidden] (Z1) at ( 0.95*\edgeunit,  0.31*\edgeunit) {$Z_1$};
    \node[hidden] (Z2) at (-0.00*\edgeunit,  1.00*\edgeunit) {$Z_2$};
    \node[hidden] (Z3) at (-0.95*\edgeunit,  0.31*\edgeunit) {$Z_3$};
    \node[hidden] (Z4) at (-0.59*\edgeunit, -0.81*\edgeunit) {$Z_4$};
    \node[hidden] (Z5) at ( 0.59*\edgeunit, -0.81*\edgeunit) {$Z_5$};
    
    \draw[edge] (Z1) to (Z2);  \draw[edge] (Z1) to (Z4);  
    \draw[edge] (Z1) to (Z5);  \draw[edge] (Z2) to (Z3);  
    \draw[edge] (Z2) to (Z4);  \draw[edge] (Z3) to (Z4); 

    \node[observed] (Y1) at ( 1.05*\edgeunit, -0.39*\edgeunit) {$Y_1$};
    \node[observed] (Y2) at (-0.00*\edgeunit,  0.30*\edgeunit) {$Y_2$};
    \node[observed] (Y3) at (-1.05*\edgeunit, -0.39*\edgeunit) {$Y_3$};
    \node[observed] (Y4) at (-0.59*\edgeunit, -1.51*\edgeunit) {$Y_4$};
    \node[observed] (Y5) at ( 0.59*\edgeunit, -1.51*\edgeunit) {$Y_5$};

    \draw[arrow] (Z1) to (Y1); 
    \draw[arrow] (Z2) to (Y2);
    \draw[arrow] (Z3) to (Y3);
    \draw[arrow] (Z4) to (Y4);
    \draw[arrow] (Z5) to (Y5);
  \end{tikzpicture}
  & \qquad \qquad &
  \begin{tikzpicture}[scale=.8]
  \node[observed] (Y1) at ( 0.95*\edgeunit,  0.31*\edgeunit) {$Y_1$};
  \node[observed] (Y2) at (-0.00*\edgeunit,  1.00*\edgeunit) {$Y_2$};
  \node[observed] (Y3) at (-0.95*\edgeunit,  0.31*\edgeunit) {$Y_3$};
  \node[observed] (Y4) at (-0.59*\edgeunit, -0.81*\edgeunit) {$Y_4$};
  \node[observed] (Y5) at ( 0.59*\edgeunit, -0.81*\edgeunit) {$Y_5$};
  \node[empty] (YY) at ( 0.59*\edgeunit, -1.51*\edgeunit) {};

  \draw[edge] (Y1) to (Y2);  \draw[edge] (Y1) to (Y3);  
  \draw[edge] (Y1) to (Y4);  \draw[edge] (Y1) to (Y5);  
  \draw[edge] (Y2) to (Y3);  \draw[edge] (Y2) to (Y4); 
  \draw[edge] (Y2) to (Y5);  \draw[edge] (Y3) to (Y4);  
  \draw[edge] (Y3) to (Y5);  \draw[edge] (Y4) to (Y5);  
  \end{tikzpicture}
  \\[2ex] \hline \\
  \begin{tikzpicture}[scale=.8]
  \node[hidden] (Z1) at ( 0.95*\edgeunit,  0.31*\edgeunit) {$Z_1$};
  \node[hidden] (Z2) at (-0.00*\edgeunit,  1.00*\edgeunit) {$Z_2$};
  \node[hidden] (Z3) at (-0.95*\edgeunit,  0.31*\edgeunit) {$Z_3$};
  \node[hidden] (Z4) at (-0.59*\edgeunit, -0.81*\edgeunit) {$Z_4$};
  \node[hidden] (Z5) at ( 0.59*\edgeunit, -0.81*\edgeunit) {$Z_5$};
  
  \draw[edge] (Z1) to (Z5);  \draw[edge] (Z2) to (Z3);  
  \draw[edge] (Z2) to (Z4);  \draw[edge] (Z3) to (Z4); 

  \node[observed] (Y1) at ( 1.05*\edgeunit, -0.39*\edgeunit) {$Y_1$};
  \node[observed] (Y2) at (-0.00*\edgeunit,  0.30*\edgeunit) {$Y_2$};
  \node[observed] (Y3) at (-1.05*\edgeunit, -0.39*\edgeunit) {$Y_3$};
  \node[observed] (Y4) at (-0.59*\edgeunit, -1.51*\edgeunit) {$Y_4$};
  \node[observed] (Y5) at ( 0.59*\edgeunit, -1.51*\edgeunit) {$Y_5$};
  
  \draw[arrow] (Z1) to (Y1); 
  \draw[arrow] (Z2) to (Y2);
  \draw[arrow] (Z3) to (Y3);
  \draw[arrow] (Z4) to (Y4);
  \draw[arrow] (Z5) to (Y5);
  \end{tikzpicture}
  & \qquad \qquad &
  \begin{tikzpicture}[scale=.8]
  \node[observed] (Y1) at ( 0.95*\edgeunit,  0.31*\edgeunit) {$Y_1$};
  \node[observed] (Y2) at (-0.00*\edgeunit,  1.00*\edgeunit) {$Y_2$};
  \node[observed] (Y3) at (-0.95*\edgeunit,  0.31*\edgeunit) {$Y_3$};
  \node[observed] (Y4) at (-0.59*\edgeunit, -0.81*\edgeunit) {$Y_4$};
  \node[observed] (Y5) at ( 0.59*\edgeunit, -0.81*\edgeunit) {$Y_5$};
  \node[empty] (YY) at ( 0.59*\edgeunit, -1.51*\edgeunit) {};

  \draw[edge] (Y1) to (Y5);  \draw[edge] (Y2) to (Y3);  
  \draw[edge] (Y2) to (Y4);  \draw[edge] (Y3) to (Y4);  
  
  \end{tikzpicture}
  \end{tabular}
  \caption{Two examples (top/bottom) of the \PLNnetwork graphical
    representation. Left: joint distribution of $p(Z_i, Y_i)$. Right:
    marginal distribution $p(Y_i)$.  The graph on the top right is a
    clique because the graph of the $Z_i$'s on the top left is
    connected.}
  \label{fig:graphicalmodel}
 \end{center}
\end{figure}


\section{Sparse Variational Inference} \label{sec:inference}

We now describe the inference strategy adopted for Model
\eqref{eq:PLNnetwork}. The aim is primarily to provide an estimate of
the parameter $\bs{\theta} = (\mb{\Beta}, \mb{\Omega})$.

\subsection{Incomplete data model}

Model \eqref{eq:PLNnetwork} belongs to the class of incomplete data
model, as the latent vectors $Z_i$ are unobserved. Therefore the
evaluation of the log-likelihood of the observed data
$ \log p_{\bs{\theta}}(\mb{Y}) = \log \int p_{\bs{\theta}}(\mb{Y},
\mb{Z}) \d \mb{Z} $
is often intractable, as well as its maximization with respect to
$\bs{\theta}$.  In this setting, the most popular strategy to perform
maximum likelihood is to use the EM algorithm of \cite{DLR77}, which
requires the evaluation of the conditional expectation of the complete
log-likelihood
$\Esp_{\bs{\theta}} \left[ \log p_{\bs{\theta}}(\mb{Y}, \mb{Z}) |
  \mb{Y} \right]$.
Unfortunately, this amounts to compute (some moments of) the
conditional distribution of each latent vector $Z_i$ conditionally to
the corresponding count vector $Y_i = (Y_{ij})_{1 \leq j \leq p}$,
which has no close form in the PLN model. \cite{Kar05} suggests to
achieve this task via numerical or Monte-Carlo integration, but this
approach is computationally too demanding when dealing even with a
moderate number of variables.

\paragraph{Variational approximation.} To circumvent this issue, we resort to a variational approximation \citep{WaJ08}, which consists in finding a proxy for the conditional distribution $p_{\bs{\theta}}(Z_i|Y_i)$. This approach relies on a divergence measure between the true conditional distribution and the approximated distribution, chosen within a simple class of distributions $\Qcal$. 

In this paper we choose $\Qcal$ as the set of Gaussian distributions. Namely, each conditional distribution $p_{\bs{\theta}}(Z_i|Y_i)$ is approximated with a multivariate Gaussian distribution with mean vector $\mb{m}_i$ and diagonal covariance matrix $\mb{S}_i = \diag(\mb{s}_i^2)$. As a consequence, the approximate distribution $q$ is fully parametrized by $\bs{\psi} = (\mb{M}, \mb{S})$, where $\mb{M}= [\mb{m}_1^\trans \dots \mb{m}_n^\trans]^\trans$,  $\mb{S} = [(\mb{s}^2_1)^\trans \dots (\mb{s}^2_n)^\trans]^\trans$ and $\Qcal$ is defined by
\begin{equation} \label{eq:var_distr}
\Qcal = \left\{q: q_{\bs{\psi}}(\mb{Z}) = \prod_{i=1}^n \Ncal(Z_i; \mb{m}_i, \mb{S}_i) = \prod_{i=1}^n q_i(Z_i) \right\}.
\end{equation}
We emphasize that the vectors $Z_i$ are independent conditionally on the $Y_i$'s, so the approximation does not lie in the product form but only in the Gaussian form of each approximate distribution. 

Choosing the Kullback-Leibler divergence to measure the quality of the approximation leads to the ``variational'' EM (VEM) algorithm, which aims to maximize the lower bound of the log-likelihood of the observed data. This lower bound is defined by
\begin{equation} \label{eq:var_bound} 
  \begin{aligned}
   J(\mb{Y} ; \bs{\psi}, \theta) & \triangleq \log p_{\bs{\theta}}(\mb{Y}) - KL\left[q_{\bs{\psi}}(\mb{Z})||p_{\bs{\theta}}(\mb{Z}|\mb{Y})\right]\\
   & = \Esp_q \left[ \log p_{\bs{\theta}}(\mb{Y}, \mb{Z}) \right] - \Esp_q \left[ \log q_{\bs{\psi}}(\mb{Z}) \right],
  \end{aligned}
\end{equation}
where $\Esp_q$ stands for the expectation with respect to the distribution $q_{\bs{\psi}}$.  

\paragraph{Sparse structure inference.} To infer the structure -- that
is the underlying 'network' -- we need to determine the support of
$\mb{\Omega}$. To this end we add an $\ell_1$ sparsity inducing
penalty to the lower bound of the likelihood, mimicking the Gaussian
case like in the Graphical-Lasso. The corresponding objective
function that we suggest to maximize is thus
\begin{equation} \label{eq:Jsp}
 J_\sp(\mb{Y}; \bs{\psi}, \bs{\theta}) \triangleq  J(\mb{Y}; \bs{\psi}, \bs{\theta}) - \lambda \; \| \mb{\Omega} \|_{\ell_1, \off} \leq \log p_{\bs{\theta}}(\mb{Y}) - \lambda \; \|\mb{\Omega}\|_{\ell_1, \off},
\end{equation}
where $\|\mb{\Omega}\|_{\ell_1, \off} = \sum_{j \neq k} |\Omega_{jk}|$
is the off-diagonal $\ell_1$-norm of $\mb{\Omega}$ and $\lambda>0$ is
a tuning parameter controling the amount of sparsity. Note that, by
construction, $J_\sp$ is a lower bound of the penalized
log-likelihood.

\subsection{Inference algorithm}

\paragraph{Objective function.}  The properties of the objective
function $J_\sp$ are mainly inherited from the properties of $J$ since
they only differ by the sparsity-inducing penalizing term, so we first
discuss $J$: by Definition \eqref{eq:var_bound} of the unpenalized
variational lower bound and thanks to the form \eqref{eq:var_distr} of
the approximate distribution, we have
\begin{equation*}
  J(\mb{Y} ; \bs{\psi}, \bs{\theta})
  = \sum_{i=1}^n \Esp_{q_i} \left[\log p_{\bs{\theta}}(Y_i|Z_i)\right] + \Esp_{q_i} \left[\log p_{\bs{\theta}}(Z_i)\right] - \Esp_{q_i} \left[\log q_{\bs{\psi}} (Z_i) \right].
\end{equation*}
Derivation of a close form is then straightforward by means of basic
properties of the multivariate Gaussian distribution and of the PLN
distribution (see \eqref{eq:PLN_properties}). We first need a couple
of auxiliary matrices, namely $\mb{\Sbar} = \sum_{i=1}^n \mb{S}_i$,
the accumulated variance matrix;
$\hat{\mb{\Sigma}} = n^{-1} \left(\mb{M}^\trans\mb{M} + \mb{\Sbar}
\right)$,
the estimated covariance matrix and
$\mb{A} \triangleq (A_{ij})_{1\leq i\leq n, 1\leq j \leq p}$ the
$n\times p$ matrix of expected counts, the entries of which are
defined by
\begin{equation*}
  A_{ij} \triangleq \Esp_q \left(Y_{ij} \right) = \Esp_q \left(\exp(o_{ij} + 
x_i^\trans \beta_j + Z_{ij})\right) = \exp(o_{ij} + x_i^\trans
  \beta_j + m_{ij} + s^2_{ij}/2).
\end{equation*}
These quantities allows us to write a compact form of the approximated
log-likelihood:
\begin{multline}
  \label{eq:J_matrix_form}
  J(\mb{Y} ; \bs{\psi}, \bs{\theta})
  =  \transpose{\mb{1}}_{n}\bigg(\mb{Y}\had \left(\mb{O} + \mb{X}\mb{\Beta} + \mb{M}\right) - \mb{A} + \frac12 \log\mb{S} \bigg)  \mb{1}_{p}  \\
  + \frac{n}{2} \log \det{\mb{\Omega}} - \frac{n}{2}
  \tr\left(\hat{\boldsymbol\Sigma} \mb{\Omega}\right) + \frac{np}{2} -
  K(\mb{Y}),
\end{multline}
where $K(\mb{Y}) = \sum_{i,j} \log(Y_{ij}!)$ and $\had$ is the Hadamard 
(term-to-term) product.
\\

We now prove the biconcavity of $J$ and the same property will follow
for $J_\sp$. This result is the building block of the alternating
optimization algorithm that we propose in the upcoming section.
\begin{proposition}[Biconcavity of $J$] \label{prop:J_biconcave}
  $J$ is biconcave in $(\mb{\Beta}, \mb{M}, \mb{S})$ and
  $\mb{\Omega}$. Furthermore, if $\mb{X}$ has full rank, $J$ is
  strictly biconcave.
\end{proposition}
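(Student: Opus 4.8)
The plan is to read off the structure directly from the closed form \eqref{eq:J_matrix_form} and to treat the two blocks of variables separately, since biconcavity only asks for concavity in $(\mb{\Beta}, \mb{M}, \mb{S})$ when $\mb{\Omega}$ is fixed, and vice versa. Since the extra penalty in $J_\sp$ is $-\lambda\|\mb{\Omega}\|_{\ell_1,\off}$, a concave (indeed linear on each orthant) function of $\mb{\Omega}$ alone, it leaves the $(\mb{\Beta}, \mb{M}, \mb{S})$ block untouched and preserves concavity in the $\mb{\Omega}$ block; hence establishing (strict) biconcavity of $J$ immediately transfers to $J_\sp$.

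First I would fix $(\mb{\Beta}, \mb{M}, \mb{S})$ and vary $\mb{\Omega}$ over the positive definite cone. Only the terms $\tfrac{n}{2}\log\det\mb{\Omega} - \tfrac{n}{2}\tr(\hat{\mb{\Sigma}}\mb{\Omega})$ depend on $\mb{\Omega}$, and crucially $\hat{\mb{\Sigma}}$ does not. On the positive definite cone, $\mb{\Omega}\mapsto\log\det\mb{\Omega}$ is strictly concave and $\mb{\Omega}\mapsto-\tr(\hat{\mb{\Sigma}}\mb{\Omega})$ is linear, so $J$ is strictly concave in $\mb{\Omega}$ regardless of $\mb{X}$. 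Next, I fix $\mb{\Omega}$ and vary $(\mb{\Beta}, \mb{M}, \mb{S})$ over $\{S_{ij}>0\}$. Classifying terms: the bilinear term $\transpose{\mb{1}}_n(\mb{Y}\had(\mb{O}+\mb{X}\mb{\Beta}+\mb{M}))\mb{1}_p$ is affine; each $A_{ij}=\exp(o_{ij}+x_i^\trans\beta_j+m_{ij}+S_{ij}/2)$ is the exponential of a function jointly affine in $(\beta_j,m_{ij},S_{ij})$, so $-\mb{A}$ is jointly concave; $\tfrac12\sum_{ij}\log S_{ij}$ is concave; and expanding $-\tfrac{n}{2}\tr(\hat{\mb{\Sigma}}\mb{\Omega})=-\tfrac12\tr(\mb{M}^\trans\mb{M}\,\mb{\Omega})-\tfrac12\sum_{ij}S_{ij}\Omega_{jj}$ yields a negative semidefinite quadratic form in $\mb{M}$ (because $\mb{\Omega}$ is positive definite) plus a term affine in $\mb{S}$. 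A sum of jointly concave functions is jointly concave, which gives biconcavity.

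For strictness when $\mb{X}$ has full (column) rank, I would compute the second directional derivative of $J$ in the $(\mb{\Beta}, \mb{M}, \mb{S})$ block along a direction $(\mb{H}_\Beta,\mb{H}_M,\mb{H}_S)$, writing $h^\Beta_j$ for the $j$-th column of $\mb{H}_\Beta$ and $h^M_i$ for the $i$-th row of $\mb{H}_M$. It reads
\[
-\sum_{i,j} A_{ij}\bigl(x_i^\trans h^\Beta_j + h^M_{ij} + \tfrac12 h^S_{ij}\bigr)^2
\;-\;\sum_{i} (h^M_i)^\trans \mb{\Omega}\, h^M_i
\;-\;\frac12 \sum_{i,j} \frac{(h^S_{ij})^2}{S_{ij}^2},
\]
a sum of three non-positive terms. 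Setting it to zero forces, in turn, $\mb{H}_S=0$ from the last term, then $\mb{H}_M=0$ from the middle term since $\mb{\Omega}\succ0$, and finally $-\sum_{i,j}A_{ij}(x_i^\trans h^\Beta_j)^2=0$ with $A_{ij}>0$, i.e. $\mb{X}h^\Beta_j=0$ for every column $j$; full column rank of $\mb{X}$ then yields $\mb{H}_\Beta=0$. Combined with the strict concavity already shown in $\mb{\Omega}$, this proves strict biconcavity.

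The main obstacle is the strictness in the $\mb{\Beta}$ block: each $-A_{ij}$ contributes only a rank-one negative semidefinite Hessian, so no single exponential term is strictly concave in $\mb{\Beta}$, and one must aggregate all of them and invoke the rank condition on $\mb{X}$ to exclude a null direction. Keeping the bookkeeping \emph{joint} rather than separate — in particular reading $-\mb{A}$ as an exponential of a jointly affine argument and tracking how $\mb{H}_S$ and $\mb{H}_M$ are eliminated before $\mb{H}_\Beta$ — is what makes the argument clean.
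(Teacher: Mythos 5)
Your proof is correct and follows essentially the same route as the paper: the second directional derivative you write down in the $(\mb{\Beta},\mb{M},\mb{S})$ block is exactly the paper's quadratic form $-\|\sqrt{\mb{A}}\had[\mb{X}\Delta\mb{\Beta}+\Delta\mb{M}+\Delta\mb{S}/2]\|_F^2-\|\Delta\mb{M}\mb{\Omega}^{1/2}\|_F^2-\|\mb{S}^\oslash\had\Delta\mb{S}\|_F^2/2$, and your elimination order ($\mb{H}_S$, then $\mb{H}_M$ via $\mb{\Omega}\succ 0$, then $\mb{H}_\Beta$ via positivity of $\mb{A}$ and full column rank of $\mb{X}$) is identical. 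The only cosmetic difference is that you first obtain plain concavity from composition rules (exponential of an affine map, $\log\det$ on the positive definite cone) where the paper reads everything off the Hessian directly.
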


\begin{proof}
  We first prove the concavity of $J(\mb{\Beta}, \mb{M}, \mb{S})$.
  For fixed $\mb{\Omega}$, the quadratic form associated to the
  Hessian of $J$ is
\[ 
f: \bs{\theta} = \mtov(\Delta \mb{\Beta}, \Delta \mb{M}, \Delta \mb{S}) \mapsto f(\bs{\theta}) = \bs{\theta}^\trans \; \nabla^2_{\mb{\Beta}, \mb{M}, \mb{S}} J(\mb{\Beta}, \mb{M}, \mb{S}, \mb{\Omega}) \; \bs{\theta}. 
\]
Let $\sqrt{\mb{A}}$ be the
element-wise square-root of matrix $\mb{A}$ and $\mb{S}^\oslash$ the
element-wise inverse of matrix $\mb{S}$. The quadratic form simplifies to
\begin{align*}
 f(\bs{\theta}) & = - \tr([\sqrt{\mb{A}} \had \mb{X} \Delta \mb{\Beta}]^\trans[\sqrt{\mb{A}} \had \mb{X} \Delta \mb{\Beta}]) - 2\tr([\sqrt{\mb{A}} \had \mb{X} \Delta \mb{\Beta}]^\trans[\sqrt{\mb{A}} \had \Delta \mb{M}] ) \\ 
 &  -\tr([\sqrt{\mb{A}} \had \mb{X} \Delta \mb{\Beta}]^\trans[\sqrt{\mb{A}} \had \Delta \mb{S}] ) - \tr([\sqrt{\mb{A}} \had \Delta \mb{M}]^\trans[\sqrt{\mb{A}} \had \Delta \mb{M}] ) \\
 &  -\tr([\sqrt{\mb{A}} \had \Delta \mb{M}]^\trans[\sqrt{\mb{A}} \had \Delta \mb{S}] ) - \tr([\sqrt{\mb{A}} \had \Delta \mb{S} ]^\trans[\sqrt{\mb{A}} \had \Delta \mb{S}] )/4 \\ 
 &  - \tr(\Delta \mb{M} \mb{\Omega} \Delta \mb{M}^\trans) 
  - \tr([\mb{S}^{\oslash} \had \Delta \mb{S} ]^\trans[\mb{S}^{\oslash} \had \Delta \mb{S}] )/2 \\
 & = - \| \sqrt{\mb{A}} \had [\mb{X} \Delta \mb{\Beta} + \Delta \mb{M} + \Delta \mb{S}/2] \|_F^2 - \| \Delta\mb{M} \mb{\Omega}^{1/2}\|_F^2 - \| \mb{S}^{\oslash} \had \Delta\mb{S} \|_F^2 / 2\\
 & \leq 0,
\end{align*}
hence the Hessian matrix is negative semi-definite, which proves the
concavity of $J(\mb{\Beta}, \mb{M}, \mb{S})$. For strictness, consider
a triplet $(\Delta \mb{\Beta}, \Delta \mb{M}, \Delta \mb{S})$ such
that $ f(\bs{\theta}) = 0$. By definition of $\mb{S}^\oslash$ and the
positive definiteness of $\mb{\Omega}$,
$\Delta \mb{S} = \Delta \mb{M} = {0}$. Finally, since all entries in
$\mb{A}$ are positive, it leads to $\mb{X} \Delta \mb{\Beta} = {0}$
which implies $\Delta \mb{\Beta} = {0}$ as soon as ${\mb{X}}$ has full
rank. The lower bound $J(\mb{\Beta}, \mb{M}, \mb{S})$ is thus strictly concave 
with this assumption.

We now prove the concavity of $J(\mb{\Omega})$. The Hessian for fixed
$(\mb{\Beta}, \mb{M}, \mb{S})$ is
$$
-\frac{n}{2}\mb{\Omega}^{-1} \kro \mb{\Omega}^{-1},
$$
where $\kro$ denotes the Kronecker product. Since $\mb{\Omega}^{-1}$
is positive definite, so is $\mb{\Omega}^{-1} \kro \mb{\Omega}^{-1}$
and therefore $J$ is strictly concave in $\mb{\Omega}$.  .
\end{proof}

\begin{corollary}[Biconcavity of $J_\sp$] \label{prop:Biconcave}
  $J_\sp$ is biconcave in $(\mb{\Beta}, \mb{M}, \mb{S})$ and
  $\mb{\Omega}$. Furthermore, if $\mb{X}$ has full rank, $J_\sp$ is
  strictly biconcave.
\end{corollary}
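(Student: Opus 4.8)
The plan is to leverage Proposition~\ref{prop:J_biconcave} directly, since $J_\sp$ and $J$ differ only by the penalty term: by Definition~\eqref{eq:Jsp}, $J_\sp(\mb{Y};\bs{\psi},\bs{\theta}) = J(\mb{Y};\bs{\psi},\bs{\theta}) - \lambda \|\mb{\Omega}\|_{\ell_1,\off}$. The whole argument rests on two elementary observations about the penalty $P(\mb{\Omega}) \triangleq \lambda \|\mb{\Omega}\|_{\ell_1,\off}$: first, it depends on $\mb{\Omega}$ alone and not on the block $(\mb{\Beta},\mb{M},\mb{S})$; second, as a positive multiple of an $\ell_1$-type norm it is a convex function of $\mb{\Omega}$, so $-P$ is concave.

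First I would treat the $(\mb{\Beta},\mb{M},\mb{S})$ block. For fixed $\mb{\Omega}$, the penalty $P(\mb{\Omega})$ is a constant with respect to $(\mb{\Beta},\mb{M},\mb{S})$, hence $J_\sp$ and $J$ have the same Hessian in these variables. Concavity in $(\mb{\Beta},\mb{M},\mb{S})$ therefore carries over verbatim from Proposition~\ref{prop:J_biconcave}, and likewise the strict concavity under the full-rank assumption on $\mb{X}$, since the argument establishing $\Delta\mb{S}=\Delta\mb{M}=\mb{0}$ and then $\mb{X}\Delta\mb{\Beta}=\mb{0}$ is unaffected by an additive constant.

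Next I would treat the $\mb{\Omega}$ block. For fixed $(\mb{\Beta},\mb{M},\mb{S})$, Proposition~\ref{prop:J_biconcave} gives that $\mb{\Omega}\mapsto J$ is strictly concave, and I would add to it the concave map $\mb{\Omega}\mapsto -P(\mb{\Omega})$. Since the sum of a strictly concave function and a concave function is strictly concave, $\mb{\Omega}\mapsto J_\sp$ is strictly concave. Note that here no full-rank hypothesis is needed, exactly as in the unpenalized case, because the strictness already comes from the $\log\det\mb{\Omega}$ term whose Hessian $-\tfrac{n}{2}\mb{\Omega}^{-1}\kro\mb{\Omega}^{-1}$ is negative definite.

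Combining the two blocks yields that $J_\sp$ is biconcave, and strictly biconcave when $\mb{X}$ has full rank, which is the claim. There is essentially no genuine obstacle here: the only point requiring a word of care is that the $\ell_1$ penalty is convex but \emph{not} strictly convex, so one must rely on the already-established strict concavity of the $\log\det$ term to retain strictness in the $\mb{\Omega}$ block, rather than hoping the penalty itself contributes any curvature.
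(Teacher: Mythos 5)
Your proof is correct and takes essentially the same route as the paper: the penalty is constant in the $(\mb{\Beta},\mb{M},\mb{S})$ block, and in the $\mb{\Omega}$ block one adds the concave map $-\lambda\|\mb{\Omega}\|_{\ell_1,\off}$ to the strictly concave $J$, using that a strictly concave plus a concave function is strictly concave. Your version simply spells out the block-by-block details that the paper leaves implicit.
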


\begin{proof} We use the concavity of
  $-\lambda\|\mb{\Omega}\|_{1,\off}$ and the fact that the sum of a
  strictly concave function with a concave function remains strictly
  concave.
\end{proof}

Unfortunately, $J$ (and consequently $J_\sp$) is not jointly convex in
$(\mb{\Beta}, \mb{M}, \mb{S}, \mb{\Omega})$ in general and
counter-examples can be found. In particular, this means that although
gradient descent will converge to a stationary point of $J$ (resp.
$J_\sp$), this stationary point is not guaranteed to be the global
optimum of $J$ (resp. $J_\sp$) and may depend on the starting point of
the iterative algorithm. Note that the same caveat applies
to alternating optimization schemes such as the (V)EM algorithm.

\paragraph{Alternate optimization.} To estimate both the variational
parameters $\bs{\psi}$ and the model parameter $\bs{\theta}$, we need
to maximize $J_\sp$ with the additional box constraint that
$\mb{S}\succ 0$, \textit{i.e.}, all variance parameters in the
variational distribution are strictly positive. We take advantage of
the biconcavity of $J_\sp$ and recourse to an alternating optimization
scheme to maximize $J_\sp$. At step $h$, the parameters are updated as
follows:
\begin{subequations}
  \label{eq:algorithm}
  \begin{align} \label{eq:algo_step1}
    (\mb{\Beta}^{(h)},\mb{M}^{(h)},\mb{S}^{(h)}) 
    & = \arg\max_{\mb{\Beta},\mb{M},\mb{S}\succ 0} J_\sp(\mb{Y}; (\mb{M},\mb{S}), (\mb{\Beta}, \mb{\Omega}^{h-1})) \nonumber \\
    & = \arg\max_{\mb{\Beta},\mb{M},\mb{S}\succ 0} J(\mb{Y}; (\mb{M},\mb{S}), (\mb{\Beta}, \mb{\Omega}^{h-1}))
  \end{align}
  \begin{equation}\label{eq:algo_step2}
    \mb{\Omega}^{(h)} = \arg\max_{\mb{\Omega}\in\mathbb{S}_{++}}
    J_\sp(\mb{Y}; (\mb{M}^{(h)},\mb{S}^{(h)}), (\mb{\Beta}^{(h)}, \mb{\Omega}))
  \end{equation}
\end{subequations}
where $\mathbb{S}_{++}$ is the set of positive-definite matrices.

Problem \eqref{eq:algo_step1} can be solved by a gradient ascent with
box-constraint for the variational variances $\mb{S}$ that must remain
nonnegative. We use the gradients which are given by
\begin{align}
  \label{eq:gradient}
  \begin{split}
   \nabla_{\mb{\Beta}} J & = \mb{X}^\trans (\mb{Y} - \mb{A}), \\
   \nabla_{\mb{M}} J & = \mb{Y} - \mb{A} - \mb{M} \mb{\Omega}, \\
   \nabla_{\mb{S}} J & = \frac12 \left(\mb{S}^\oslash - \mb{A} - \mb{1}_n \diag(\mb{\Omega})^\trans\right).\\
 \end{split}
\end{align}

When $\lambda>0$, Problem \eqref{eq:algo_step2} is easily shown to be
equivalent to solving 
\begin{equation}
  \label{eq:algo_step2_graphical_lasso}
  \minimize_{\mb{\Omega}\in\mathbb{S}_{++}} - \frac{n}{2} \log \det{\mb{\Omega}} + \frac{n}{2}
  \tr\left(\hat{\boldsymbol\Sigma} \mb{\Omega}\right) +  \lambda \; \|\mb{\Omega}\|_{\ell_1, \off}.
\end{equation}
We recognize a sparse multivariate Gaussian maximum likelihood problem
\citep{yuan2007,banerjee2008}, efficiently solved by the
graphical-Lasso algorithm \citep{FHT08}.

Finally, we alternate the two steps \eqref{eq:algo_step1} and
\eqref{eq:algo_step2} until convergence of the objective function
$J_\sp$. The algorithm is initialized using the estimator of the
graphical-Lasso obtained by shrinking the covariance matrix computed
on the Pearson residuals of a linear model predicting
$\log(1 + \mb{Y})$ from $\mb{X}$ and $\mb{O}$.

\paragraph{Model Selection.} Model selection is a notoriously hard
problem in unsupervised problems in general and in network inference
in particular. Several procedures have been proposed to select an
optimal value of $\lambda$ in Gaussian graphical models (GGM) and we
rely on both (i) the Stability Approach to Regularization Selection
(StARS) introduced in \citet{Liu2010} and (ii) variants of BIC
taylored for the high-dimensional setting, such as EBIC
\citep{chen2008extended}.

Briefly, StARS relies on resampling a large number $B$ of subsamples
of size $m$ (with or without replacement) and infers a network
$\mb{\Omega}^{(b, \lambda)}$ on each subsample $b$ for each value of
$\lambda$ in a grid $\Lambda$. The frequency of inclusion of edge
$e = i\sim j$ is computed as
$p_e^\lambda = \# \{b: \Omega^{(b, \lambda)}_{ij} \neq 0\}/B$ and its
variance as $v_e^\lambda = p_e^\lambda (1 - p_e^\lambda)$. The
stability $\stab(\lambda)$ of the network is then simply
$\stab(\lambda) = 1 - 2\bar{v}^\lambda$ where $\bar{v}^\lambda$ is the
average of the $v_e^\lambda$. Note that $\stab(\lambda)$ decreases
from $1$ for $\lambda = \infty$ (empty network) to a nonnegative value
for small $\lambda$. StARS selects the smallest $\lambda$ (densest
network) for which $\stab(\lambda) \geq 1 -
2\beta$.
\citeauthor{Liu2010} suggest using $2\beta = 0.05$ and subsamples of
size $m = \lfloor 10 \sqrt{n}\rfloor$ based on theoretical results. We
use them as default.

By contrast, BIC is a non-resampling based alternative with no
computational overhead. The extended family of BIC introduced in
\citet{chen2008extended} penalizes both the number of unknown
parameters and the complexity of the model space. In the framework of
PLNnetwork, we have the following expression
\begin{equation}
  \label{eq:EBIC}
  \text{EBIC}_\gamma(\hat{\mb{B}}, \hat{\mb{\Omega}}_\lambda)  =   -2 \, \textrm{loglik}
  (\mb{Y};\hat{\mb{B}}, \hat{\mb{\Omega}}_\lambda) + \log(n) (|\mathcal{E}_\lambda| + p d)
  + \gamma \log {p(p+1)/2 \choose |\mathcal{E}_\lambda|},
\end{equation}
where $\mathcal{E}_\lambda$ is the edge set of a candidate graph and
${m \choose n }$ corresponds to the binomial coefficient
(\textit{i.e.}, the number of models with $n$ parameters among $m$
possibles). The first penalty term in the right-hand-side is the usual
BIC penalization: our model has $pd$ unknown regression parameters in
$\mb{B}$ plus $|\mathcal{E}_\lambda|$ inferred terms in
$\hat{\mb{\Omega}}_\lambda$. The second penalty term, tuned by
$\gamma \in [0, 1]$, is used to adjust the tendency of the usual BIC
-- recovered for $\gamma = 0$ -- to choose overly dense graphs in the
high-dimensional setting. Here, we propose to replace
$\textrm{loglik}$ in \eqref{eq:EBIC} by its variational surrogate
\eqref{eq:var_bound}, that is, $J(\mb{Y};\hat{\mb{\Omega}})$ and use
$\gamma = 0$, \textit{i.e.}, simple BIC, instead of the value 0.5
recommended by \citeauthor{foygel2010extended} for GGM, that leads
almost systematically to empty networks in all our numerical
experiment.

\paragraph{Implementation.} We implemented our alternate optimization
algorithm in a \texttt{R/C++} package \citep{R} called
\textbf{PLNmodels}, available on github
\url{https://github.com/jchiquet/PLNmodels}. The Gradient ascent with
box constraints found in the first step is performed by means of the
implementation found in the \textbf{nlopt} library \citep{nlopt} of a
variant of the conservative convex separable approximation found in
\citep{svanberg2002class}. We use the \textbf{glasso} \texttt{R}
package \citep{FHT08} to solve the graphical-Lasso problem of the
second step.


\section{Simulation study} \label{sec:simul}

\subsection{Simulation protocol}

\paragraph*{Network generation.} The ground truth graphs that
originate the precision matrices are generated according to various
random graph-models, namely Erd\"os-R\'enyi model (no particular
structure), preferential attachment model (scale-free property) or
affiliation model (community structure). These models are used to
generate a binary adjacency matrix $\mb{G}$ from which we build a
precision matrix $\mb{\Omega}$ that must be positive-definite while
sharing the same sparsity pattern (but for the diagonal) as $\mb{G}$. We
ensure these two properties as follows:
\begin{equation*}
  \tilde{\mb{\Omega}} = \mb{G} \times v, \quad \mb{\Omega} = \tilde{\mb{\Omega}} + \diag(|\min(\mathbf{eig}(\tilde{\mb{\Omega}}))| + u ), \quad \text{with } u,v > 0.
\end{equation*}
The two scalars $u,v$ are used to partially control the difficulty of
the network inference problem: they are related to the strength of the
partial correlations -- and in turn of the interactions in the network
-- while they also control the conditioning of $\mb{\Omega}$. Higher
$v$ leads to stronger correlations and higher $u$ to better
conditioning. We always set $v=0.3, u=0.1$ in our simulations. This
protocol is similar to the one at play in the \texttt{R} package \textbf{huge}.

\paragraph*{Compositional data generation.} In order not to promote
any network reconstruction method in particular and thus provide fair
comparisons, the simulated count data are not drawn according to a PLN
distribution. Instead, we introduce a compositional model inspired
from community ecology data. This model also applies to sequencing
data in genomics where counts are not comparable between samples,
since sequencing technologies do not provide an absolute measurements
of species or gene abundances. We sketch the process of data
generation in Figure~\ref{fig:compositional_data}, the steps of which
are:
\begin{enumerate}[i)]
\item Draw the 'real' (unreachable) \emph{abundances} $\mathbf{a}_i$
  of the $p$ species in sample $i$ such that
  $\log(\mathbf{a}_i) \sim \mathcal{N}(\mathbf{\mb{X} \mb{B}},
  \mb{\Omega}^{-1})$; the design matrix $\mb{X}$ accounts for some
  covariates and $\mb{\Omega}$ is the latent network between species
  drawn as explained above.
\item Transform abundances $\mathbf{a}_i$ to \emph{proportions}
  $\boldsymbol\pi_{i}$ with logistic-transform, i.e. $\pi_{ij} = e^{b_{ij}}/\sum_j e^{b_{ij}}$.
\item For random value of $N_i$ -- the sampling effort in sample $i$,
  typically the sequencing depth -- draw observed \emph{counts} $Y_i$
  via a multinomial distribution
  $\mathcal{M}(N_i, \boldsymbol\pi_i)$.
\end{enumerate}

\begin{figure}[htbp!]
  \flushright
  \begin{minipage}[c]{.9\textwidth}
    \begin{scriptsize}
      \begin{minipage}[c]{0.35\textwidth}            
          \begin{minipage}[c]{.45\textwidth}
            network $\rightsquigarrow$ $\mb{\Omega}$  \hfill \\
            \begin{tikzpicture}[scale=.8]
              \node[shape=circle, draw=black] (A) at (-1,0) {A};
              \node[shape=circle, draw=black] (B) at (0,0) {B};
              \node[shape=circle, draw=black] (C) at (0,-1) {C};
              \node[shape=circle, draw=black] (D) at (1,0.5) {D};
              \node[shape=circle, draw=black] (E) at (1,-0.5) {E};
              \path[-][very thick, black!50!green] (A) edge node[left] {} (B);
              \path[-][very thick, black!50!green] (B) edge node[left] {} (C);
              \path[-][very thick, black!50!green] (B) edge node[left] {} (D);
              \path[-][very thick, black!30!red] (C) edge node[left] {} (E);
              \draw[dashed,->,very thick] (1.5,1) -- (2,1.5);
              \draw[dashed,->,very thick] (1.5,-1) -- (2,-1.5);
            \end{tikzpicture}
          \end{minipage}
          \hfill
          \begin{minipage}[c]{0.45\textwidth}
            abundances~$\log\mathbf{a}_i$
            \[\left(
                {\begin{array}{c}
                   8 \\
                   8 \\
                   8 \\
                   9 \\
                   2
                 \end{array}
               }\right)\]
           \[\left(
               {\begin{array}{c}
                  6 \\
                  6 \\
                  6 \\
                  8 \\
                  1
                \end{array}
              }\right)\]
        \end{minipage}      
    \end{minipage}
    \begin{minipage}[c]{0.2\textwidth}    
      \begin{center}
        \begin{minipage}[c]{0.355\linewidth}
          \vspace*{0.7cm}
          \begin{tikzpicture}[scale=1]
            \draw[->,very thick] (0,1) -- (0.6,1);
            \draw[->,very thick] (0,-2) -- (0.6,-2);
          \end{tikzpicture}
        \end{minipage} \hfill
        \begin{minipage}[c]{.6\linewidth}
          proportions~${\boldsymbol\pi}_i$
          \[\left(
              {\begin{array}{c}
                 0.17 \\
                 0.17 \\
                 0.17 \\
                 0.47 \\
                 0.02
               \end{array}
             }\right)\]
         \[\left(
             {\begin{array}{c}
                0.09 \\
                0.09 \\
                0.09 \\
                0.71 \\
                0.02
              \end{array}
            }\right)\]
      \end{minipage}
    \end{center}
  \end{minipage}
  \begin{minipage}[c]{0.3\textwidth}
    \begin{center}
      \begin{minipage}[c]{0.20\textwidth}
        \vspace*{0.7cm}
        \begin{tikzpicture}[scale=1]
          \draw[->,very thick] (0,0) -- (0.6,0);
        \end{tikzpicture}
      \end{minipage}
      \hfill
      \begin{minipage}[c]{0.75\textwidth}
        counts
        \begin{tabular}{c|c|c|}
          \multicolumn{1}{c}{~} & \multicolumn{1}{c}{$i = 1$} & \multicolumn{1}{c}{$i = 2$} \\
          \hhline{~|-|-|}
          A &   3        &    8        \\ \hhline{~|-|-|}
          B &   3        &    6        \\ \hhline{~|-|-|}
          C &   5        &    2        \\ \hhline{~|-|-|}
          D &   8        &    30       \\ \hhline{~|-|-|}
          E &   1        &    4        \\ \hhline{~|-|-|}
          \multicolumn{1}{c}{~}        \\ \hhline{~|-|-|}
          $N_i$ & 20         & 50          \\ \hhline{~|-|-|}
        \end{tabular}
      \end{minipage}
    \end{center}
  \end{minipage}
\end{scriptsize}
\end{minipage}

\caption{Compositional model used for data generation}
\label{fig:compositional_data}
\end{figure}
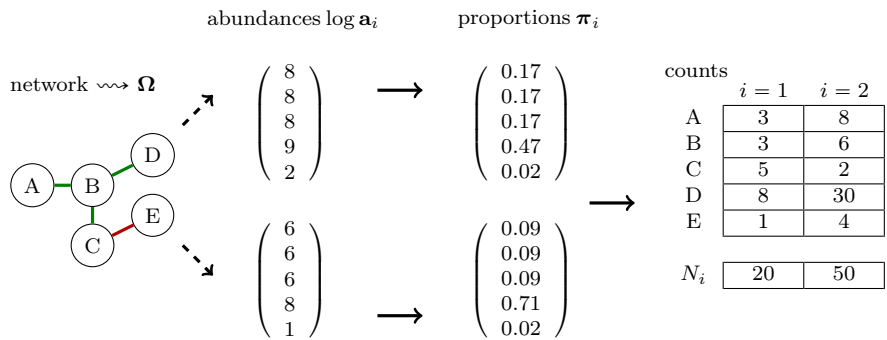

\paragraph*{Experimental setup.} We fix the number of variables to
$p = 50$ in all our experiments. Indeed, the networks with a number of
nodes of this order of magnitude are the largest ones that can be
decently analyzed by biologists in genomics or ecology given the number of 
samples at hand. They also correspond to the (order of magnitude of) the number 
of nodes considered in Section \ref{sec:appli} for the three real-world
applications.

The sampling effort $N_i$ are drawn from a negative binomial
distribution so that
$N_i\sim^{\text{i.i.d}} \mathcal{NB}(\mu=1000, \nu)$, that is, a
mean total count number of $1000$ per sample with a variance equal to
$1000 + 1000^2/\nu$. The covariates are chosen so that $\mb{X}$ is
the design matrix of a one-way ANOVA with 3 balanced groups, hence
$d= 3$. The regression coefficients are sampled in a uniform
distribution so that $\Beta_{jk}\sim^{\text{i.i.d.}}\mathcal{U}(-b,b)$. Those 
parameters were chosen to replicate (marginal) count distributions 
-- in terms of location and dispersion -- commonly observed in 
microbial ecology applications. 

To control the difficulty of the problem, we vary the sample size $n$
as well as the following quantities:
\begin{enumerate}[i)]
\item the overdispersion of the sampling efforts $N_i$: the larger
  $\nu$, the smaller the overdispersion and the more similar the
  samples;
\item the effect of the covariates $\mb{X}\mb{\Beta}$: the larger $b$,
  the larger the Signal to Noise Ratio (SNR) in the underlying linear model 
and the smaller the fraction of variance explained by $\mb{\Omega}$.
\end{enumerate}

\paragraph*{Competitors.} For all numerical experiments and simulations, we 
refer to the implementation of a given competitor as
its name using \texttt{teletype} family font. For instance, our method
is referred to as \texttt{PLNnetwork}.

Among the many possible competitors to \texttt{PLNnetwork}, we pick
some representatives dispatched in the three following families of
method:
\begin{enumerate}
\item Vanilla sparse GGM methods \citep{FHT08,Meinshausen2006} applied
  after a log-transformation of the count. We choose the
  \texttt{graphical-Lasso} as implemented in the \texttt{R}-package
  \textbf{glasso} \citep{FHT08}, with log transformation of the
  count table as pretreatment.
\item Sparse log-linear graphical models
  \citep{yang2012graphical,allen2012log}, referred to as
  \texttt{sparse LLM} in the following. We rely on the implementation
  found in the \texttt{R}-package \textbf{RNAseqNet}
  \citep{imbert2017multiple}.
\item Methods dedicated to compositional count data, whose
  gold-standard approaches are \texttt{SPiEC-Easi} \citep{Kurtz2015}
  for the precision matrix or \texttt{sparCC} \citep{Friedman2012} for the 
correlation one. Both methods account for
  compositional data by using pseudo-counts plus
  log-transformation. The former applies graphical-Lasso and
  non-paranormal transformation \citep{liu2009nonparanormal}. The
  latter uses resampling and thresholded correlations. The
  \texttt{R}-package \textbf{spieceasi} provides an implementation of
  these two methods
  .
\end{enumerate}

\paragraph*{Performance assessment.} Each competitor produces a
sequence of inferred networks indexed by a tuning parameter that
controls the number of edges in the final estimator, from an empty to
a full graph, ordered by reliability. Since the problem of choosing
tuning parameters is known as particularly troublesome in unsupervised
problems like network inference, the reconstruction methods are
commonly compared by means of precision-recall (PR) or Receiver
operating characteristic (ROC) curves that leave the choice of a
particular tuning parameter aside. We recall that ROC curves are
obtained by plotting the true positive rate (or recall) as a function
of the false positive rate (or fall-out), while PR curve represents
the positive predictive value (or precision) as a function of the
recall. While the former is more spread in the literature, the latter
is more informative in unbalanced cases with a small proportion of
positives. Indeed, PR gives less weight to regions with a large false
positive rate, which are generally not interesting for the
practitioners \citep{davis2006relationship}. We use both of them in
our experiments, and use area under the ROC curve (AUC) and area under
the PR curve (AUPR) to summarize one simulation: the closer to one,
the better the network reconstruction.

\subsection{Results}

We now present the results of two batches of numerical experiments
that illustrate the effect of different experimental factors (namely
the sampling effort and the presence of an external covariate) on the
quality of the network reconstruction. On top of these experiments, we
present a numerical study that address the model selection issue in
\texttt{PLNnetwork}, that is, the choice of the tuning parameter
$\lambda$.

\paragraph*{Non-compositional methods fail.} We first study the effect
of a different sampling effort between the sample on the quality of
the network reconstruction by varying the value of
$\nu \in \{100, 10, 2\}$ (corresponding to a small, medium and a large
variability) in the compositional model. We compare
\texttt{graphical-Lasso}, \texttt{sparse LLM} and \texttt{PLNnetwork},
the latter being the only method accounting for the compositional
problem, by introducing an offset which is sample dependent. This
offset is computed as the total sum of counts found in each
sample. Results averaged over 100 replicates are displayed in
Figure~\ref{fig:compositional_problem}.  The first row shows the AUC
for varying sample size and a different variability between
samples. As expected, \texttt{PLNnetwork} is the only method which is
not sensitive to the sampling effort, contrary to
\texttt{graphical-Lasso} and \texttt{sparse LLM} which completely fail
at recovering the dependence structure in presence of some unaccounted
source of variability between the samples. In the second row, the AUPR
exhibits an even larger discrepancies between the compositional and
non-compositional methods: while the AUC is close to 1 for a small
effect of the variability and a large sample size, the AUPR attests
that the first edges inferred by \texttt{graphical-Lasso} and
\texttt{sparse LLM} are in fact most of the time false positives.
\begin{figure}[htbp!]
  \centering
  \begin{tabular}{@{}c@{}}
    \hspace{1.5cm} \small Variance of the sampling effort \\
   \includegraphics[width=\textwidth]{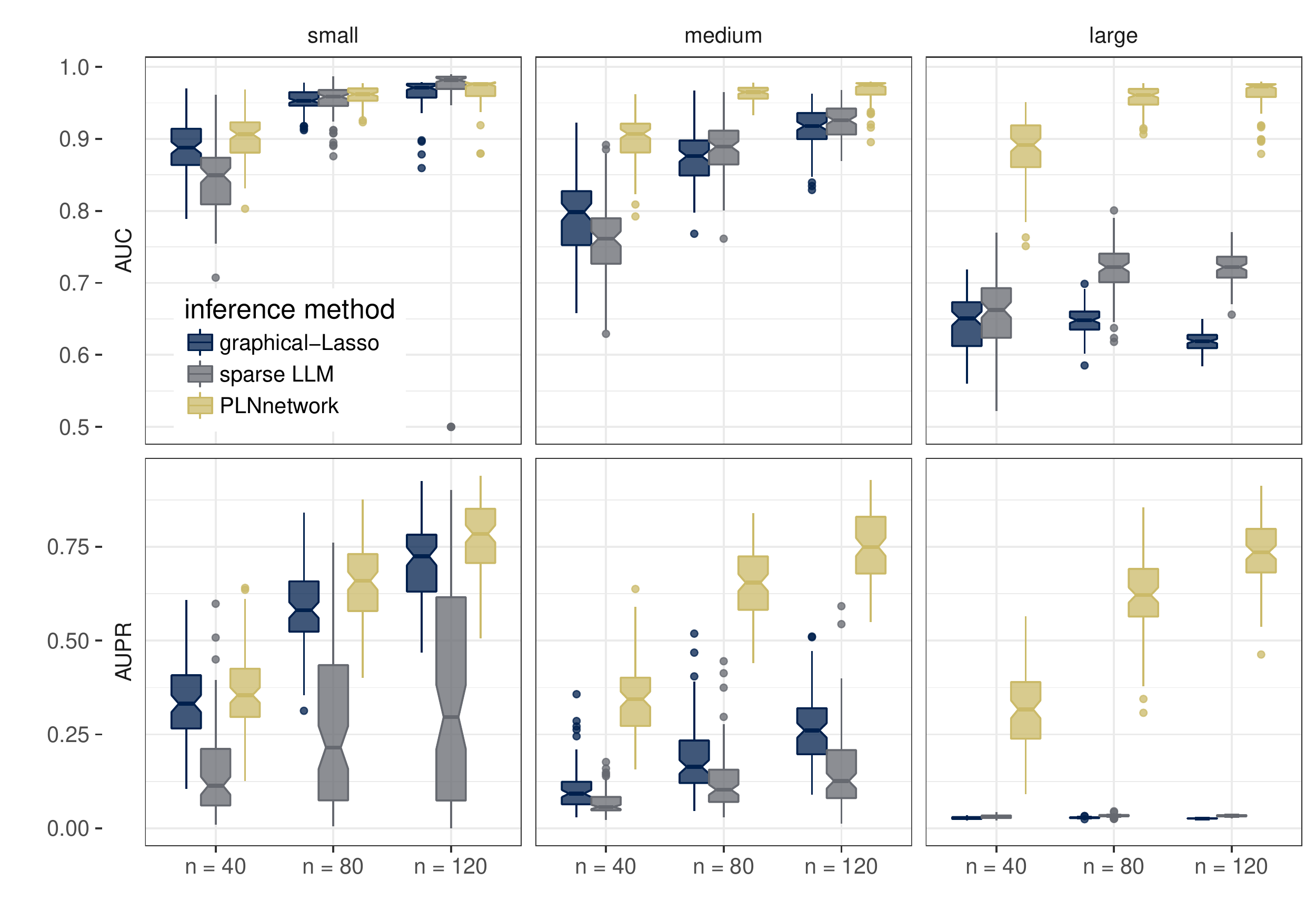}
  \end{tabular}
  \caption{Effect of the variability of the sampling effort, a.k.a the
    compositional problem, on the quality of the reconstruction of
    50-node random networks. First and second row respectively
    represent AUC and AUPR boxplots computed on 100 simulations.  }
  \label{fig:compositional_problem}
\end{figure}

\paragraph*{Accounting for covariates effect does matter.} We now
focus on the effect of an external covariate in the data, and how it
affects the performance of the methods. Regarding the sampling effort,
we fix $\nu = 2$ in this experiment, and we only compare the
compositional methods together since the other approaches would fail
in this setting. The strength of the covariate effect is controlled by
the parameter $b$ in our compositional model. The larger $b$, the
larger the effect of the covariate and the harder the problem of
network reconstruction when not accounting for the covariate. 
We vary $b\in\{1,2,3\}$, hence, a small, medium and large effect. On top of 
that, we vary the sample size and consider the three network topologies 
(scale-free, random and community networks), always with $p=50$ nodes. We 
evaluate the performance of \texttt{SPiEC-Easi}, \texttt{sparCC} and
\texttt{PLNnetwork} in terms of AUC and AUPR on 100 simulation of each
kind and report the average values in Table~\ref{tab:res_inference}.
\begin{table}[htbp!]
  \centering
  \begin{small}
    
\begin{tabular}{@{}l@{\hspace{\tabcolsep}}l@{\hspace{\tabcolsep}}r@{ (}r@{)\hspace{\tabcolsep}}r@{ (}r@{)\hspace{\tabcolsep}}r@{ (}r@{)\hspace{2\tabcolsep}}r@{ (}r@{)\hspace{\tabcolsep}}r@{ (}r@{)\hspace{\tabcolsep}}r@{ (}r@{)}}
  \hline  
   & & \multicolumn{6}{c}{\textbf{area under the ROC}} & \multicolumn{6}{c}{\textbf{area under the PR}} \\[-1ex]
   & & \multicolumn{6}{c@{}}{\bf\hrulefill} & \multicolumn{6}{c@{}}{\bf\hrulefill} \\
   \textbf{covar.} & \textbf{method} &
   \multicolumn{2}{c}{$\mathbf{n=p/2}$} & \multicolumn{2}{c}{$\mathbf{n=p}$} & \multicolumn{2}{c}{$\mathbf{n=2p}$} & 
   \multicolumn{2}{c}{$\mathbf{n=p/2}$} & \multicolumn{2}{c}{$\mathbf{n=p}$} & \multicolumn{2}{c}{$\mathbf{n=2p}$} \\[.5ex] 
  \hline
  \multicolumn{14}{c}{\textbf{scale-free network}} \\
  \hline
  small  & \texttt{PLNnetwork} & .66          & 0.05 & \textbf{.78} & 0.05 & \textbf{.91} & 0.03 & \textbf{.11} & 0.04 & \textbf{.25} & 0.07 & \textbf{.49} & 0.08 \\  
         & \texttt{sparCC}     & .66          & 0.05 & .73          & 0.05 & .79          & 0.05 & .09          & 0.03 & .16          & 0.05 & .24          & 0.07 \\  
         & \texttt{SPiEC-Easi} & \textbf{.67} & 0.04 & .77          & 0.05 & .85          & 0.04 & .10          & 0.03 & .17          & 0.05 & .27          & 0.07 \\[1ex]  
  medium & \texttt{PLNnetwork} & \textbf{.62} & 0.05 & \textbf{.73} & 0.05 & \textbf{.85} & 0.05 & \textbf{.09} & 0.03 & \textbf{.18} & 0.06 & \textbf{.34} & 0.08 \\  
         & \texttt{sparCC}     & .55          & 0.05 & .57          & 0.05 & .58          & 0.05 & .05          & 0.01 & .05          & 0.01 & .06          & 0.01 \\  
         & \texttt{SPiEC-Easi} & .61          & 0.04 & .66          & 0.04 & .71          & 0.03 & .06          & 0.01 & .06          & 0.01 & .07          & 0.01 \\[1ex]  
  large  & \texttt{PLNnetwork} & \textbf{.58} & 0.05 & \textbf{.67} & 0.05 & \textbf{.78} & 0.05 & \textbf{.07} & 0.03 & \textbf{.12} & 0.04 & \textbf{.23} & 0.07 \\  
         & \texttt{sparCC}     & .52          & 0.04 & .53          & 0.04 & .53          & 0.05 & .04          & 0.01 & .04          & 0.01 & .04          & 0.01 \\  
         & \texttt{SPiEC-Easi} & .57          & 0.04 & .60          & 0.03 & .65          & 0.03 & .05          & 0.01 & .05          & 0.01 & .05          & 0.01 \\ 
  \hline
  \multicolumn{14}{c}{\textbf{random network}} \\
  \hline
  small  & \texttt{PLNnetwork} & .77          & 0.07 & \textbf{.90} & 0.04 & \textbf{.96} & 0.01 & \textbf{.14} & 0.07 & \textbf{.36} & 0.11 & \textbf{.64} & 0.09 \\  
         & \texttt{sparCC}     & .76          & 0.06 & .83          & 0.06 & .89          & 0.04 & .11          & 0.05 & .23          & 0.09 & .36          & 0.11 \\  
         & \texttt{SPiEC-Easi} & \textbf{.78} & 0.05 & .87          & 0.04 & .92          & 0.03 & .11          & 0.05 & .23          & 0.09 & .36          & 0.11 \\[1ex]  
  medium & \texttt{PLNnetwork} & \textbf{.72} & 0.06 & \textbf{.85} & 0.05 & \textbf{.94} & 0.02 & \textbf{.09} & 0.04 & \textbf{.24} & 0.09 & \textbf{.49} & 0.10 \\  
         & \texttt{sparCC}     & .59          & 0.06 & .61          & 0.07 & .62          & 0.06 & .03          & 0.01 & .04          & 0.02 & .04          & 0.02 \\  
         & \texttt{SPiEC-Easi} & .67          & 0.05 & .74          & 0.05 & .77          & 0.03 & .04          & 0.01 & .05          & 0.02 & .05          & 0.01 \\[1ex]  
  large  & \texttt{PLNnetwork} & \textbf{.64} & 0.07 & \textbf{.78} & 0.06 & \textbf{.88} & 0.04 & \textbf{.06} & 0.03 & \textbf{.14} & 0.07 & \textbf{.29} & 0.09 \\  
         & \texttt{sparCC}     & .54          & 0.05 & .53          & 0.06 & .54          & 0.06 & .02          & 0.01 & .02          & 0.01 & .03          & 0.01 \\  
         & \texttt{SPiEC-Easi} & .61          & 0.05 & .65          & 0.04 & .68          & 0.03 & .03          & 0.00 & .03          & 0.00 & .03          & 0.01 \\ 
  \hline
  \multicolumn{14}{c}{\textbf{community network}} \\
  \hline
  small  & \texttt{PLNnetwork} & .60          & 0.04 & .69          & 0.04 & \textbf{.78} & 0.05 & \textbf{.17} & 0.03 & \textbf{.26} & 0.04 & \textbf{.38} & 0.05 \\  
         & \texttt{sparCC}     & \textbf{.62} & 0.04 & .66          & 0.04 & .70          & 0.04 & .16          & 0.02 & .21          & 0.04 & .26          & 0.04 \\  
         & \texttt{SPiEC-Easi} & \textbf{.62} & 0.04 & \textbf{.70} & 0.04 & .77          & 0.04 & \textbf{.17} & 0.02 & .24          & 0.04 & .31          & 0.04 \\[1ex]  
  medium & \texttt{PLNnetwork} & .57          & 0.03 & \textbf{.65} & 0.04 & \textbf{.73} & 0.05 & \textbf{.15} & 0.02 & \textbf{.22} & 0.03 & \textbf{.31} & 0.05 \\  
         & \texttt{sparCC}     & .55          & 0.03 & .56          & 0.04 & .56          & 0.03 & .11          & 0.02 & .12          & 0.02 & .12          & 0.02 \\  
         & \texttt{SPiEC-Easi} & \textbf{.58} & 0.03 & .63          & 0.03 & .67          & 0.03 & .13          & 0.02 & .14          & 0.02 & .15          & 0.02 \\[1ex]  
  large  & \texttt{PLNnetwork} & \textbf{.55} & 0.03 & \textbf{.60} & 0.04 & \textbf{.67} & 0.04 & \textbf{.13} & 0.02 & \textbf{.17} & 0.03 & \textbf{.24} & 0.04 \\  
         & \texttt{sparCC}     & .52          & 0.03 & .52          & 0.03 & .52          & 0.03 & .10          & 0.02 & .10          & 0.02 & .10          & 0.02 \\  
         & \texttt{SPiEC-Easi} & \textbf{.55} & 0.03 & .58          & 0.03 & .62          & 0.03 & .11          & 0.01 & .11          & 0.02 & .12          & 0.01 \\  
  \hline
\end{tabular}
  \end{small}
  \caption{Areas under the ROC curve and Areas under the
    Precision-Recall curve of the compositional methods
    (\texttt{PLNnetwork}, \texttt{sparCC} and \texttt{SPiEC-Easi}) in
    various settings, averaged over 100 simulations, with standard
    errors.}
\label{tab:res_inference}
\end{table}

\texttt{PLNnetwork}, which is the only method that can effectively
account for the covariate effect, is a clear leader in almost all
settings. Even when the effect of the covariate is small, it seems to
outperform its competitors, except in the high dimensional setup and
for community network, where all methods perform similarly. We underline that 
the superiority of \texttt{PLNnetwork} is even clearer in terms of AUPR. In 
other words, \texttt{PLNnetwork} --
as a statistical method -- has a similar or higher power than its
competitors while keeping a lower false positive rate.

\paragraph*{Model Selection issue.}  In this numerical study, we focus
on \texttt{PLNnetwork} and address the difficult question of choosing
the tuning parameter and compare the two alternatives presented in
Section~\ref{sec:inference}. Figure~\ref{fig:selection} reports the
results of our numerical experiments on model selection: we compare
the StARS criterion computed on 50 subsamples with a stability
threshold of $1-2\beta = 0.95$ as recommended by
\citeauthor{liu2010stability}, to the BIC for choosing $\lambda$ in
\texttt{PLNnetwork}. We give the performance in terms of
precision/recall and fall-out/recall averaged over 100 simulation,
which correspond to single points on the PR and ROC curves
respectively. As can been seen, StARS systematically outperforms BIC
in terms of recall and precision. 
However, this increase in performance comes at a huge
computational burden.

\begin{figure}[htbp!]
  \centering
  \begin{tabular}{lc@{\hspace{1cm}}lc}
    & \hspace{.5cm} \small choice on ROC curve &  & \hspace{.5cm} \small choice 
on PR curve \\
     \rotatebox{90}{\hspace{2.7cm}\small recall} & 
\includegraphics[width=.4\textwidth]{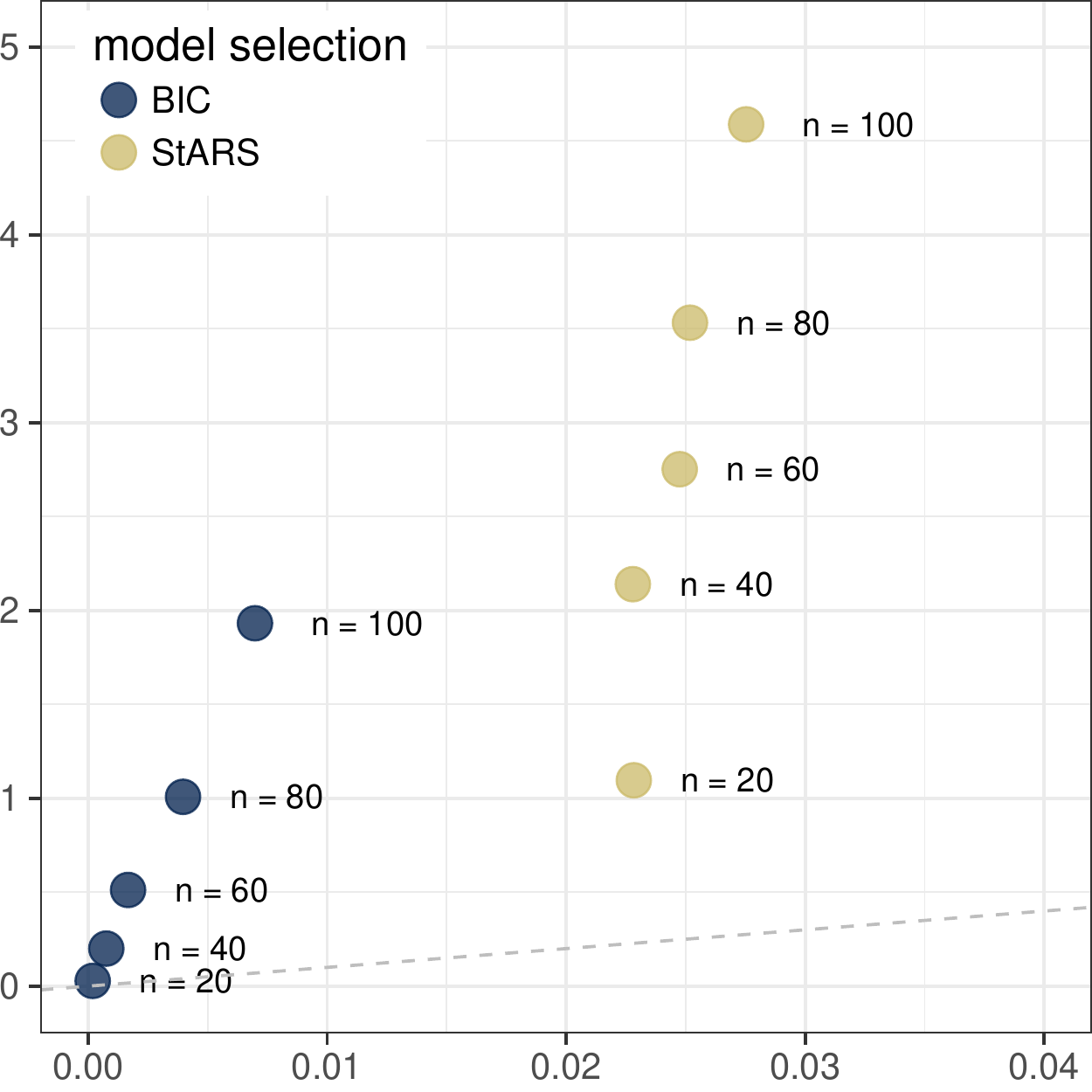}
   & \rotatebox{90}{\hspace{2.7cm}\small precision} & 
\includegraphics[width=.4\textwidth]{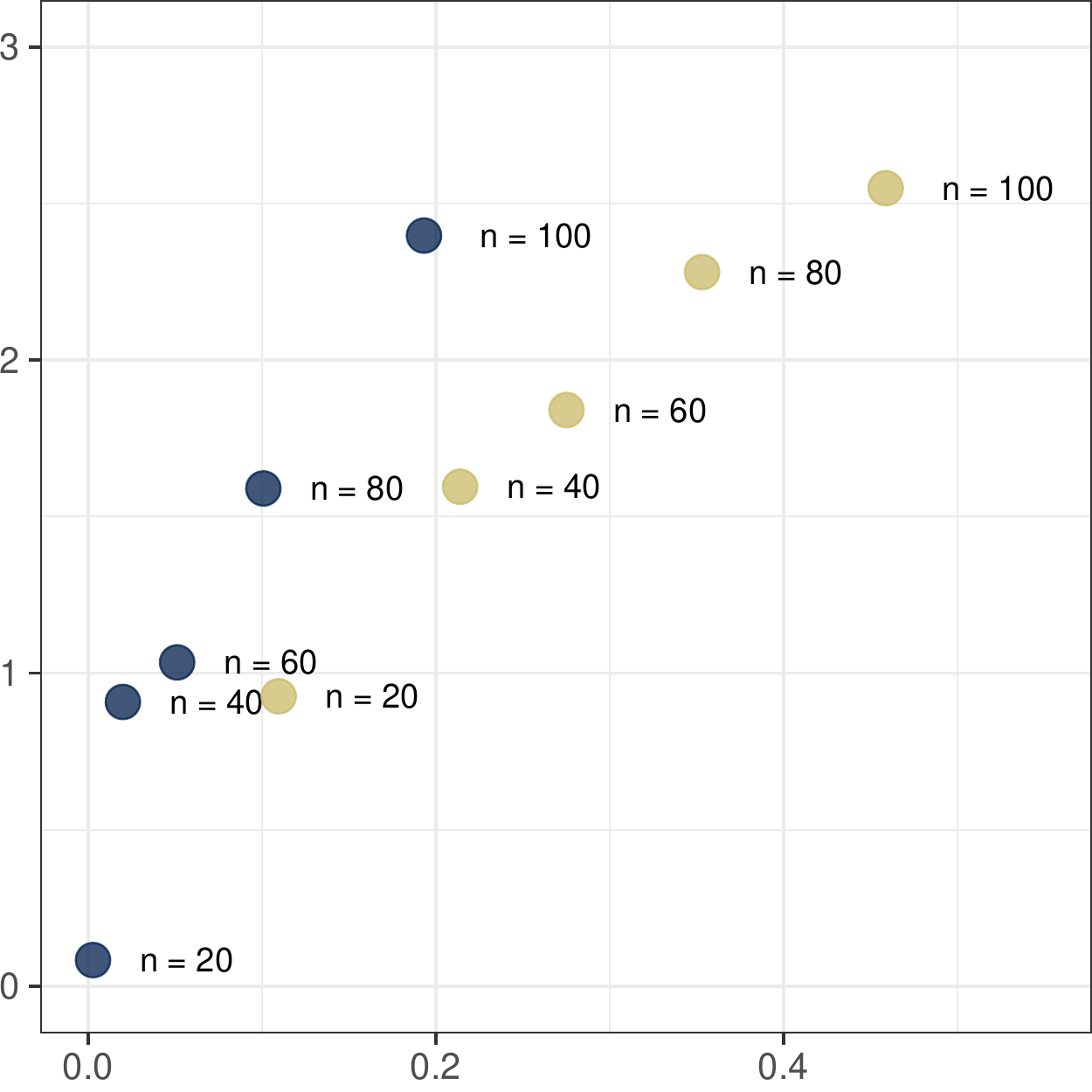} \\
   & fall-out & & recall\\
  \end{tabular}
  \caption{Performance of the model selection procedures (BIC or
    StARS) in \texttt{PLNnetwork} for reconstructing 50-node randoms
    networks, averaged over 100 simulations.}
  \label{fig:selection}
\end{figure}


\section{Illustrations} \label{sec:appli}

We now illustrate our methodology with a series of examples from different fields.
Barents fish is a simple ecological example that we use to emphasizes the importance of accounting for covariates when performing structure inference. The French election example shows that our method can handle large datasets; we also use it to show how to interpret the results and propose some validation checks. Finally we consider a metagenomic example (Oak mildew), for which we propose a deeper analysis. More specifically, we show how to decompose the effects of the different covariates on the inferred interactions and we propose some biological interpretations of the results.

\subsection{Barents fish} \label{sec:barents}

The data consist in the abundance of $p=30$ fish species measured in $n=89$ 
stations from the Barents sea between April and May 1997. The data have been 
collected and described by \cite{FNA06} and re-analyzed by \cite{Gre13,GrP14}. 
For each sample, the latitude and longitude of the station as well as the 
temperature and depth were recorded. Thanks to a precise experimental protocol, 
all abundances are comparable so no offset term is required in the model. Our 
aim here is to illustrate how the inclusion of covariates avoids spurious edges 
in the inferred network.

\paragraph{Introducing covariates reduces the number of inferred edges.}
To this aim, we fitted the \PLNnetwork model with ($a$) no covariates, ($b$) two 
environmental covariates (temperature and depth) and ($c$) all covariates 
(\emph{i.e.} the previous two and geographical location) using the same 
penalty grid every time with $\lambda$ increasing geometrically from $0.03$ to 
$15.17$. As expected, Figure \ref{fig:barents-networks} (top right panel) shows 
that, for all models, the number of edges increases as the penalty decreases. It 
also shows that, for any penalty, the number of edges decreases as (plain lines) 
the richness of the model increases ($c > b > a$) and that most edges recovered 
in the full ($c$) model are also recovered in the partial models ($a$, black dotted 
curve) and ($b$, blue dotted curve). This suggests that naive inference is likely 
to find not only genuine edges but also spurious ones corresponding to 
co-variations induced by external covariates. Interestingly, the dotted 
curve shows that the proportion of common edges between models $b$ and $c$ is 
higher than the one between models $a$ and $c$. This suggests that 
environmental covariates explain a substantial part of the apparent 
species co-variations.

\paragraph{Spurious interactions can be linked with specific covariates.}
The rest of Figure \ref{fig:barents-networks} displays the networks
inferred with the three models for three different levels of sparsity
(controlled by $\lambda$). For an illustrative purpose, the values of
$\lambda$ have been chosen so that, in average, each species interacts
with two others for each of the three models ($a$), ($b$) and
($c$). This results in networks with approximately $2p = 60$
edges. The comparison of these networks confirms the conclusions
obtained in the simulation study. One additional conclusion is that a
set of core species seem to have direct interactions, or at least,
interactions that cannot be simply explained by geographical location
and environmental covariates (bottom right panel).

On the contrary, some interactions seem to be actually indirect. For example, 
the interactions between the longear eelpout ({\sl Ly.se}) and some species from 
the core group disappear when accounting for temperature and depth, 
suggesting that the covariation of their respective abundances results from 
variations of the environmental conditions. Similarly, the interactions 
between the Greenland halibut ({\sl Re.hi}) and the core group is kept when 
introducing temperature and depth in the model, but disappears when accounting 
for location (longitude and latitude) suggesting that these interactions 
actually reflect a common response to fluctuations of biotic and abiotic 
characteristics across sites. 

To confirm this interpretation, we fitted an over-dispersed Poisson generalized 
linear model for the abundance of both species (not shown). We found that both 
temperature and depth have a significant effect on the abundance of the 
longear eelpout and that the longitude has a significant influence on the 
abundance of the Greenland halibut (all corresponding $p$-values being smaller 
than $1e^{-4}$).

\begin{figure}
 \begin{center}
  \begin{tabular}{cccc}
    & ($a$) no covariate & ($b$) temperature \& depth & ($c$) all covariates \\
    \hline
    \rotatebox{90}{\hspace{1.8cm} $\lambda=.84$} &
    \includegraphics[width=.32\textwidth]{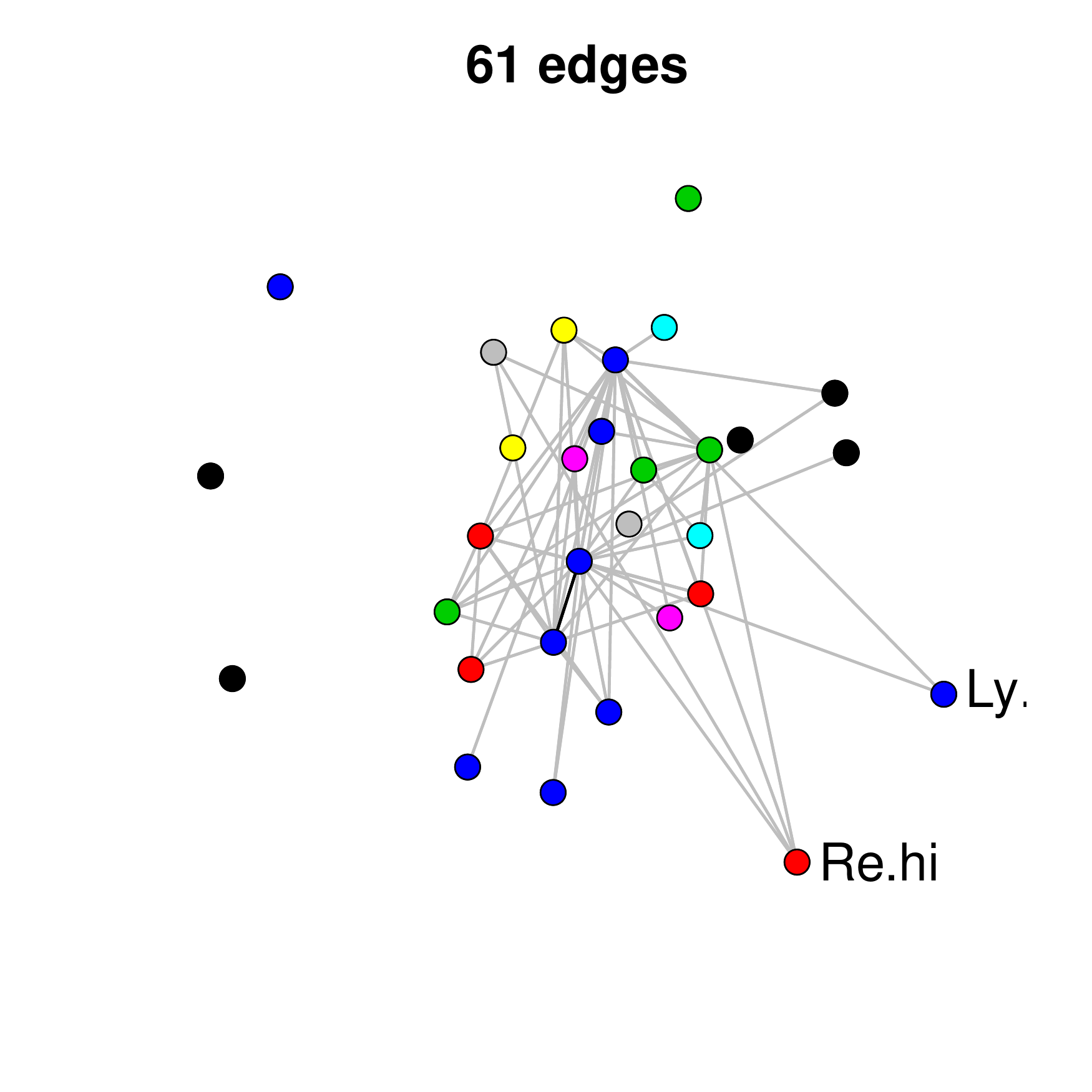} &
    \includegraphics[width=.32\textwidth]{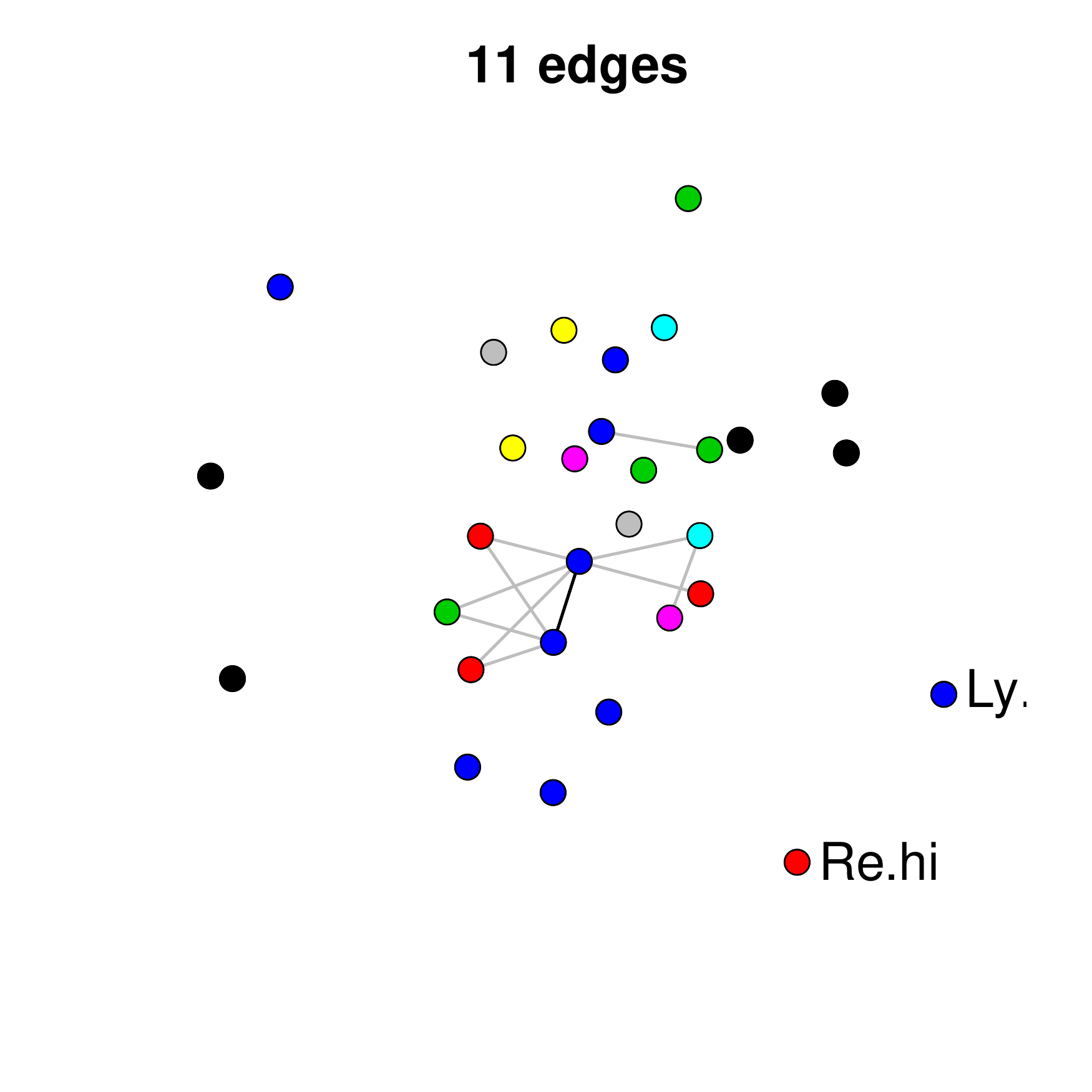} &
    \includegraphics[width=.32\textwidth]{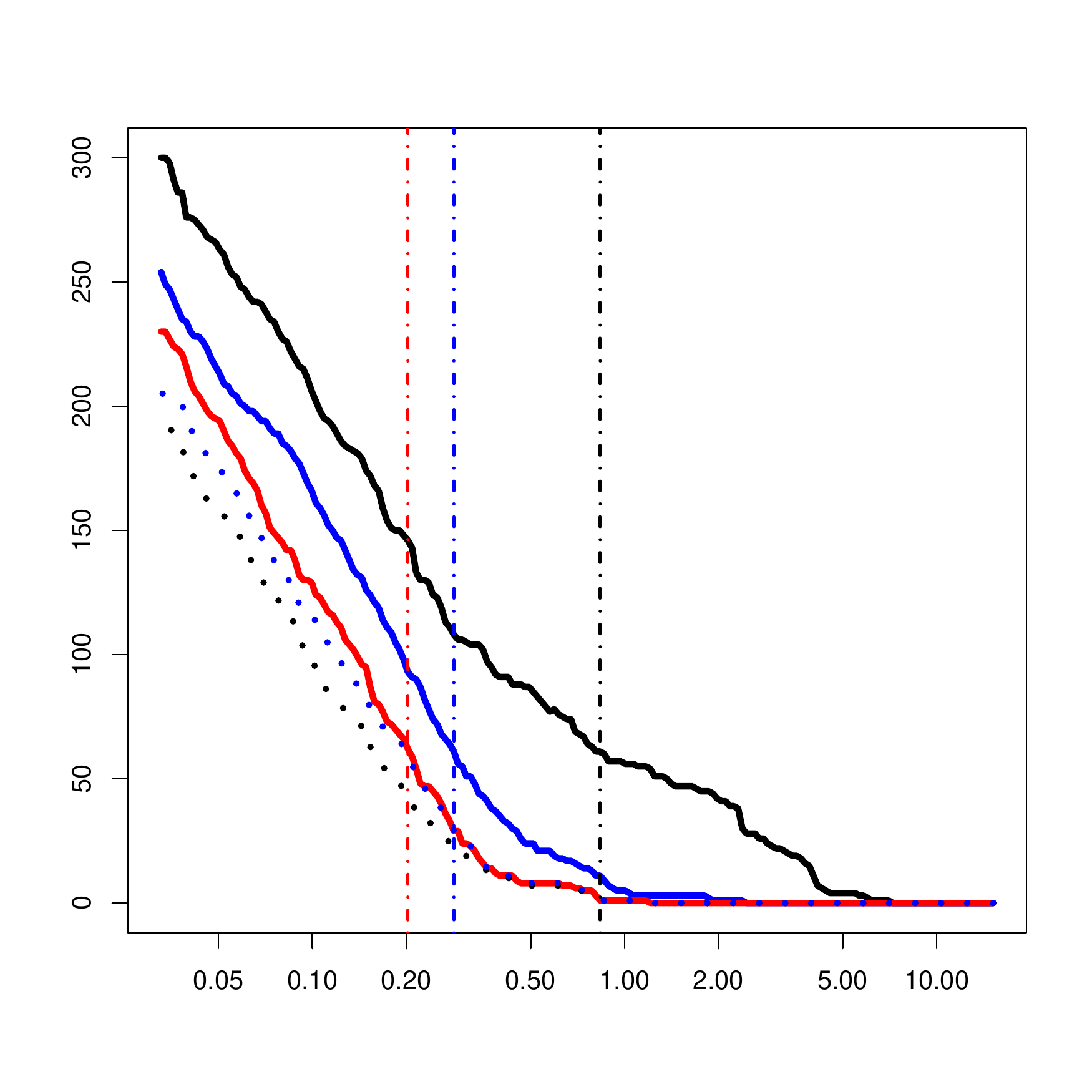} \\
    \rotatebox{90}{\hspace{1.8cm} $\lambda=.28$} &
    \includegraphics[width=.32\textwidth]{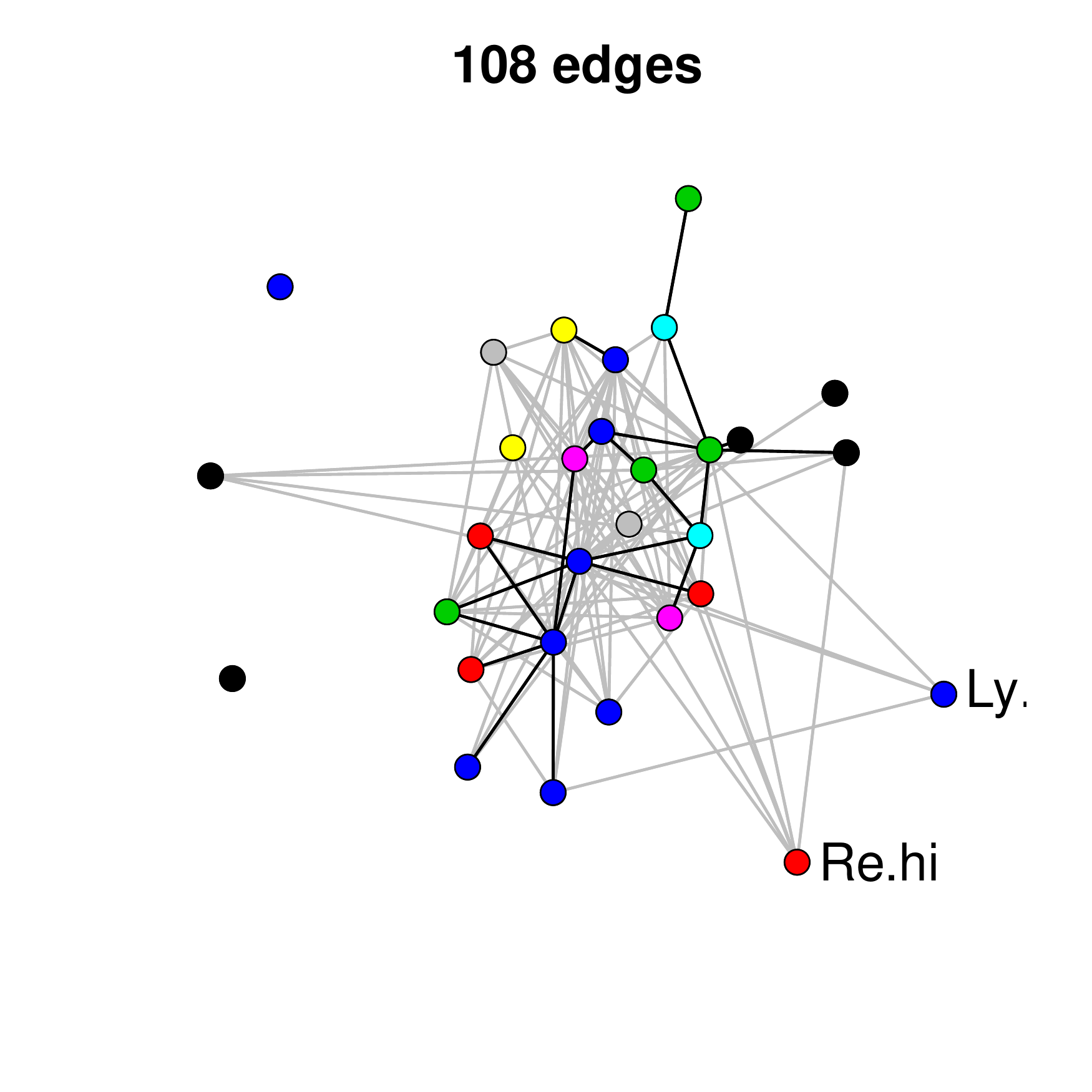} & \includegraphics[width=.32\textwidth]{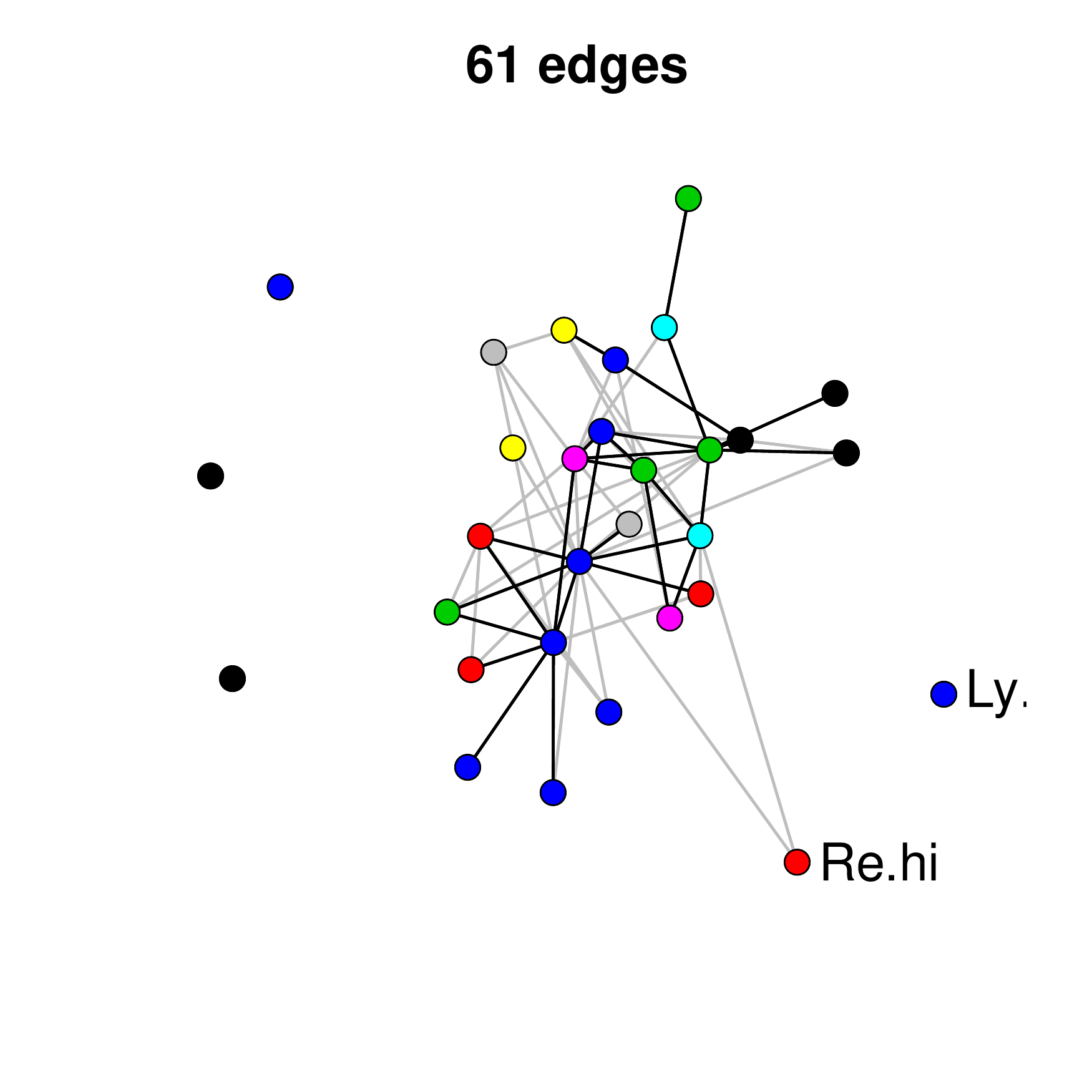} &
    \includegraphics[width=.32\textwidth]{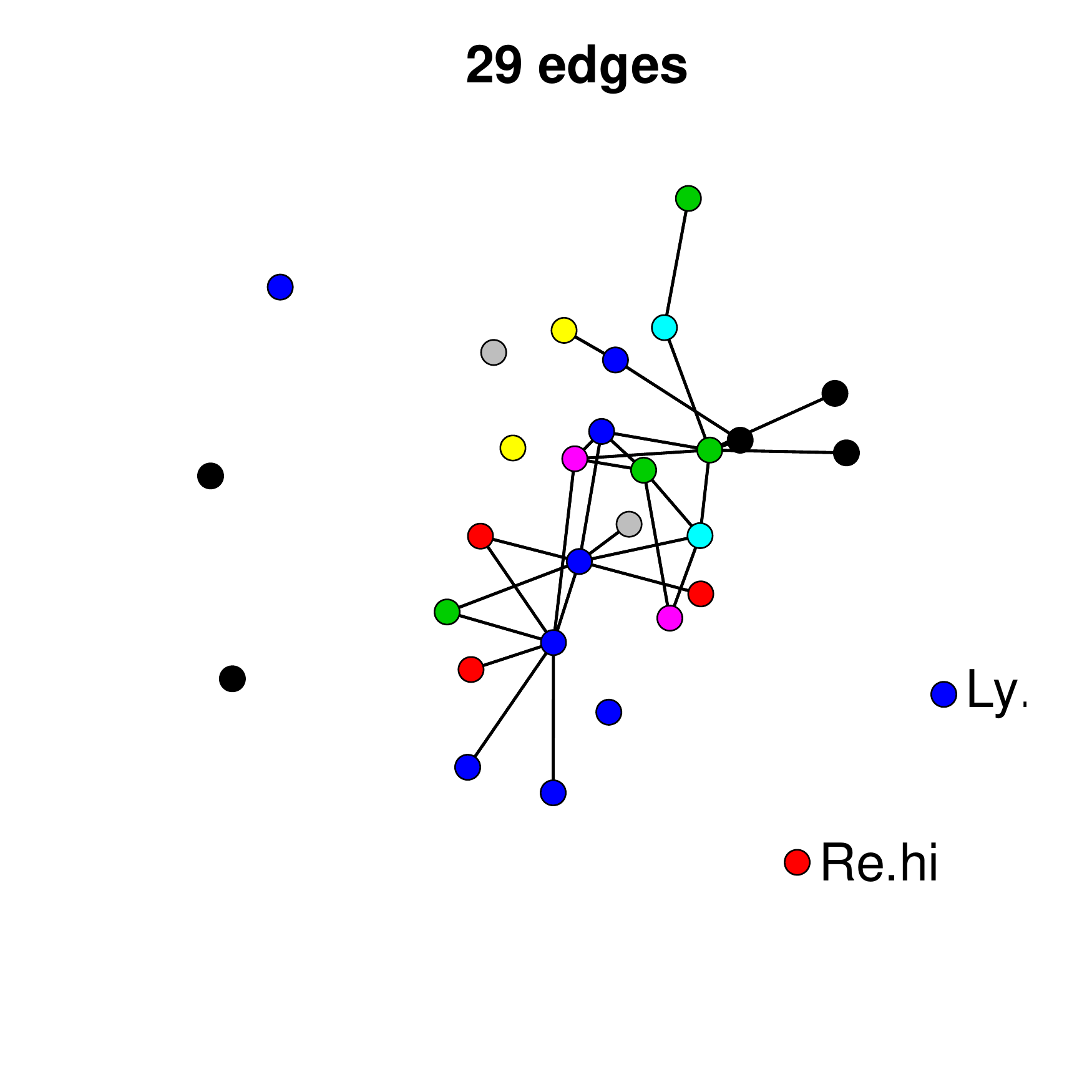} \\
    \rotatebox{90}{\hspace{1.8cm} $\lambda=.20$} &
    \includegraphics[width=.32\textwidth]{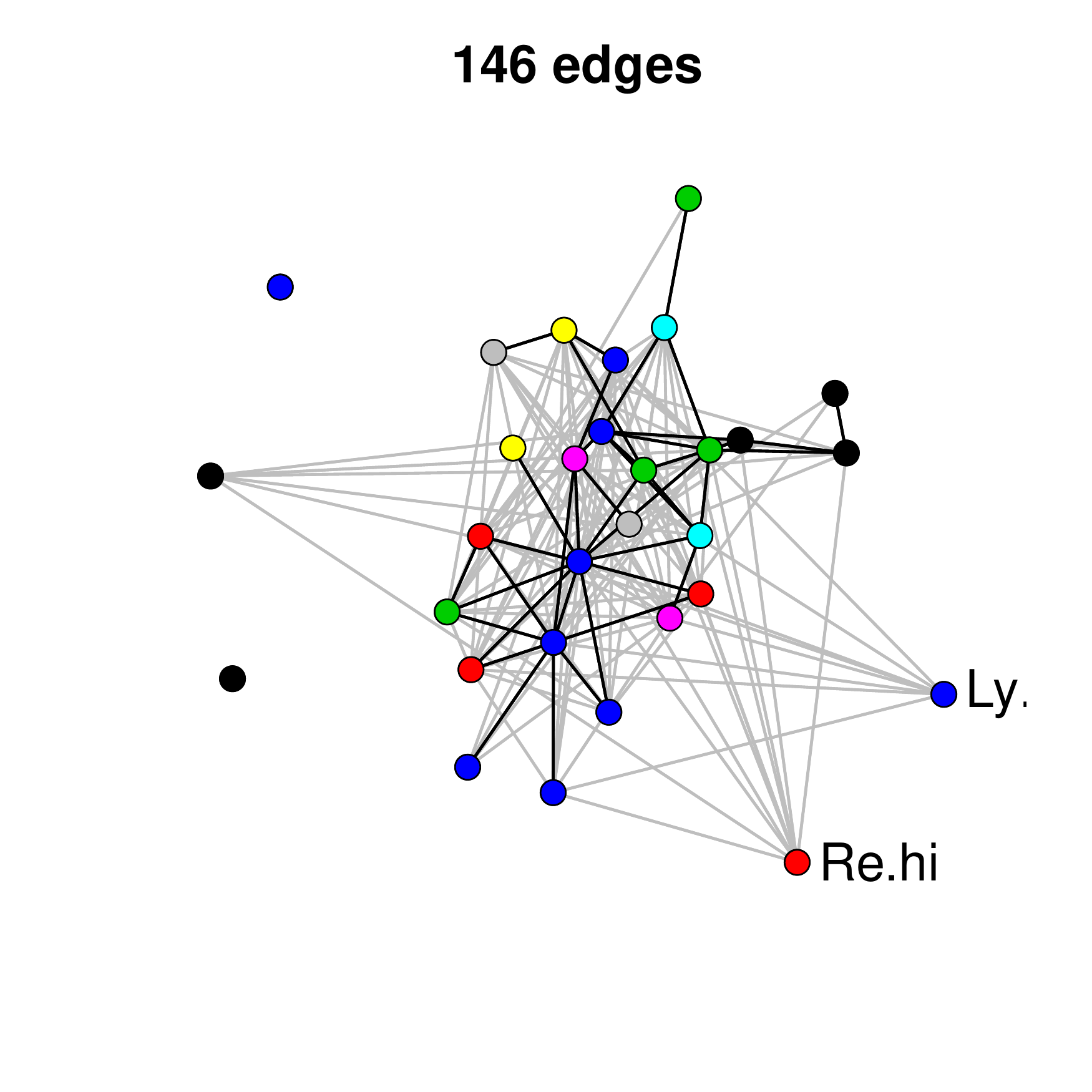} &
    \includegraphics[width=.32\textwidth]{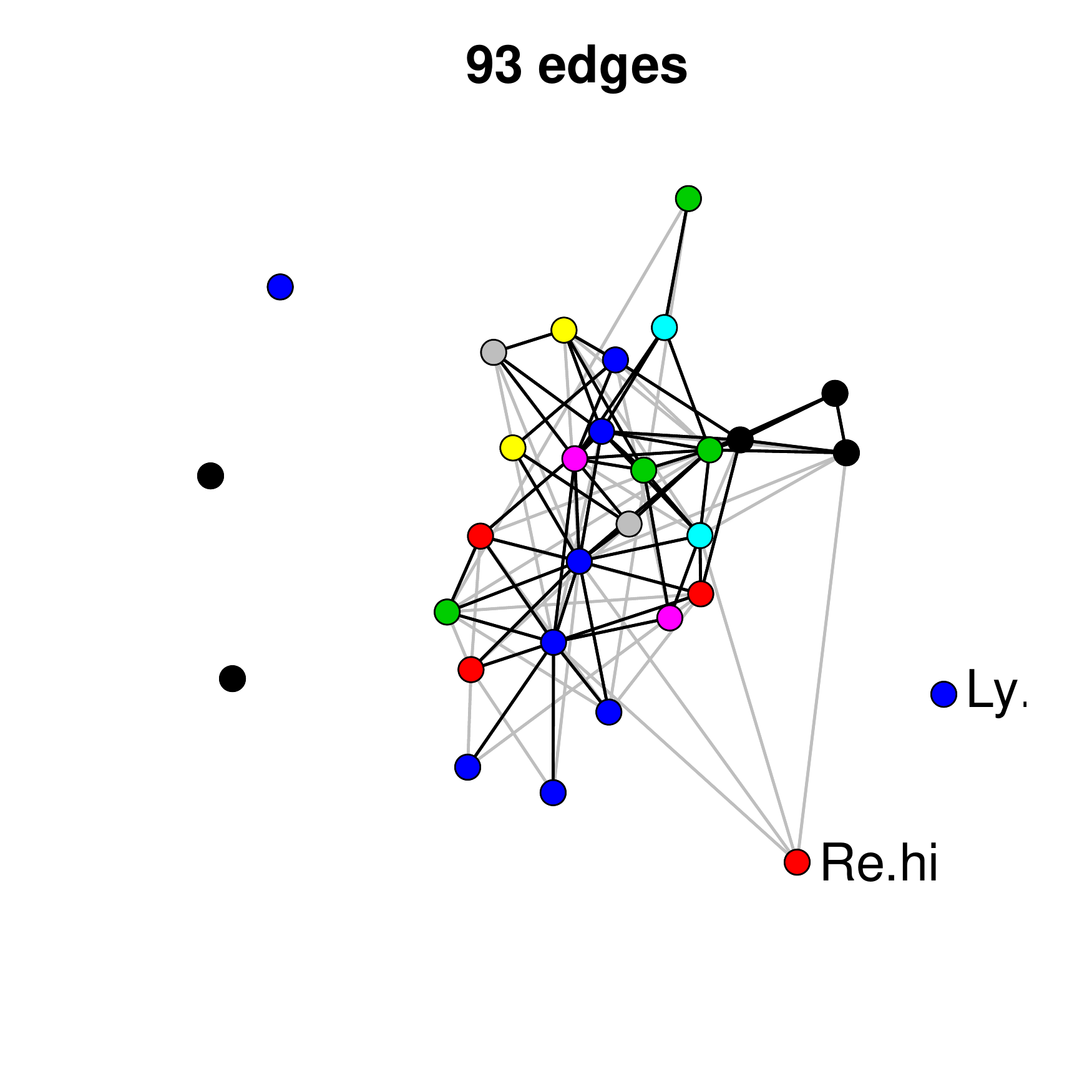} &
    \includegraphics[width=.32\textwidth]{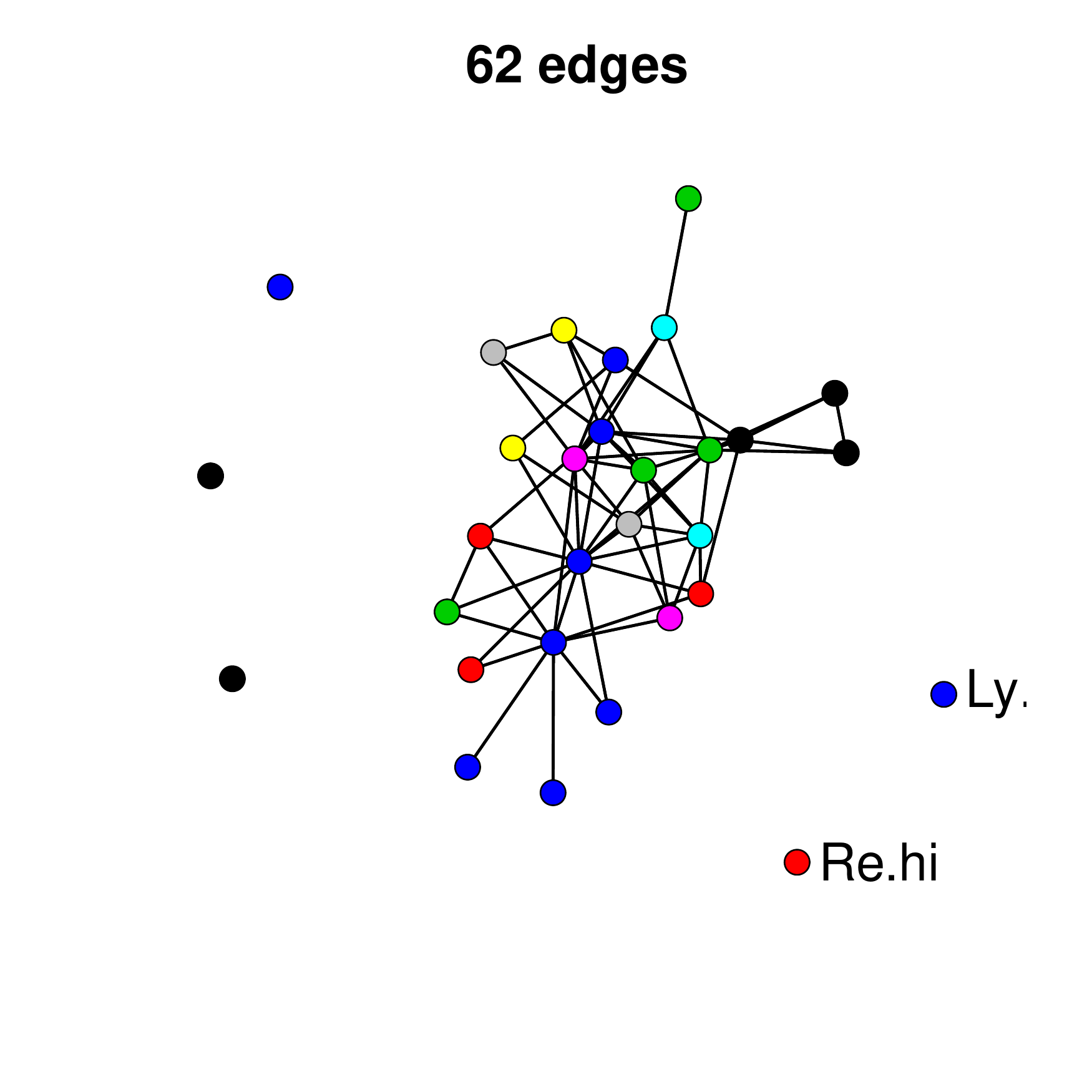}
  \end{tabular}
  \caption{
  Inferred networks with increasing penalties (top: $\lambda=.84$, middle: 
$\lambda=.28$, bottom: $\lambda=.20$) for different covariate sets (see top 
line). Each node corresponds to a given species. The position of the nodes are 
kept fixed. Node color: species family \citep[see][]{FNA06}. Black edges: edges in 
common with the network inferred with all covariates and same $\lambda$. The 
missing network (top right panel, all covariates and $\lambda=.84$) contains 
only one edge. 
Top right: number of edges as a function of $\lambda$. Black: no
covariate, blue: temperature and depth, red: all covariates, dotted
black: common edges with no and all covariates, dotted blue: common
edges between two and all covariates.  Vertical dashed lines: the
three chosen values of $\lambda$. \label{fig:barents-networks}}
 \end{center}
\end{figure}



\subsection{French Presidential Elections, 2017} \label{sec:election}

Our second dataset comes from the first round of the French
presidential election of 2017 and consists in the votes cast for each
of the 11 candidates in the more than 63 000 polling stations. Our
goal here is to find \emph{competing} candidates, who appeal to
different voters, and \emph{compatible} candidates, who appeal to the
same voters, after accounting for the fact that elections are a
zero-sum game.

Data were downloaded from the French open data platform
\url{data.gouv.fr}\footnote{\url{ https://www.data.gouv.fr/fr/datasets/election-presidentielle-des-23-avril-et-7-m
    ai-2017-resultats-definitifs-du-1er-tour-par-bureaux-de-vote/}}
and filtered to remove stations with no votes. To reduce inference
times, we consider a random subset of 13,704 stations that accounted
for 20\% of the registered population. The voting population in those
booths varied wildly, ranging from 10 to 105,891 (6th district of
French citizens living abroad) registered voters, with a median at 736
and 99.5\% of the stations with less than 1,700 voters. We consider
the log-registered population of voters, and not log-turnout, as
an offset to account for different station sizes. This means in
particular that votes are not affected by the \emph{compositionality
  effect} as much as in other settings, as they do not sum up to the
offset.  Voting patterns are well-known to depend on geography and we
therefore consider department (a French administrative division) as a
proxy for geography.

We consider three models in total: without offset, with offset but no
covariate, with offset and covariates and use the same grid of
$\lambda$ -- decreasing geometrically from 1 to $1e^{-3}$ in 31 steps --
for all. The optimal value $\lambda^\star$ was selected using StARS
with 100 subsamples of size 1170 ($\simeq 10\sqrt{n}$). Results of our
analysis are displayed in Figure~\ref{fig:french-elections}.
\begin{figure}[htbp!]
  \centering
  \begin{tabular}{lc@{\hspace{0.25cm}}c@{\hspace{0.25cm}}c}
    & No offset & Offset & Offset + Departments \\
    \rotatebox{90}{\hspace{1.75cm} \small Networks} & 
\includegraphics[width=.3\linewidth]{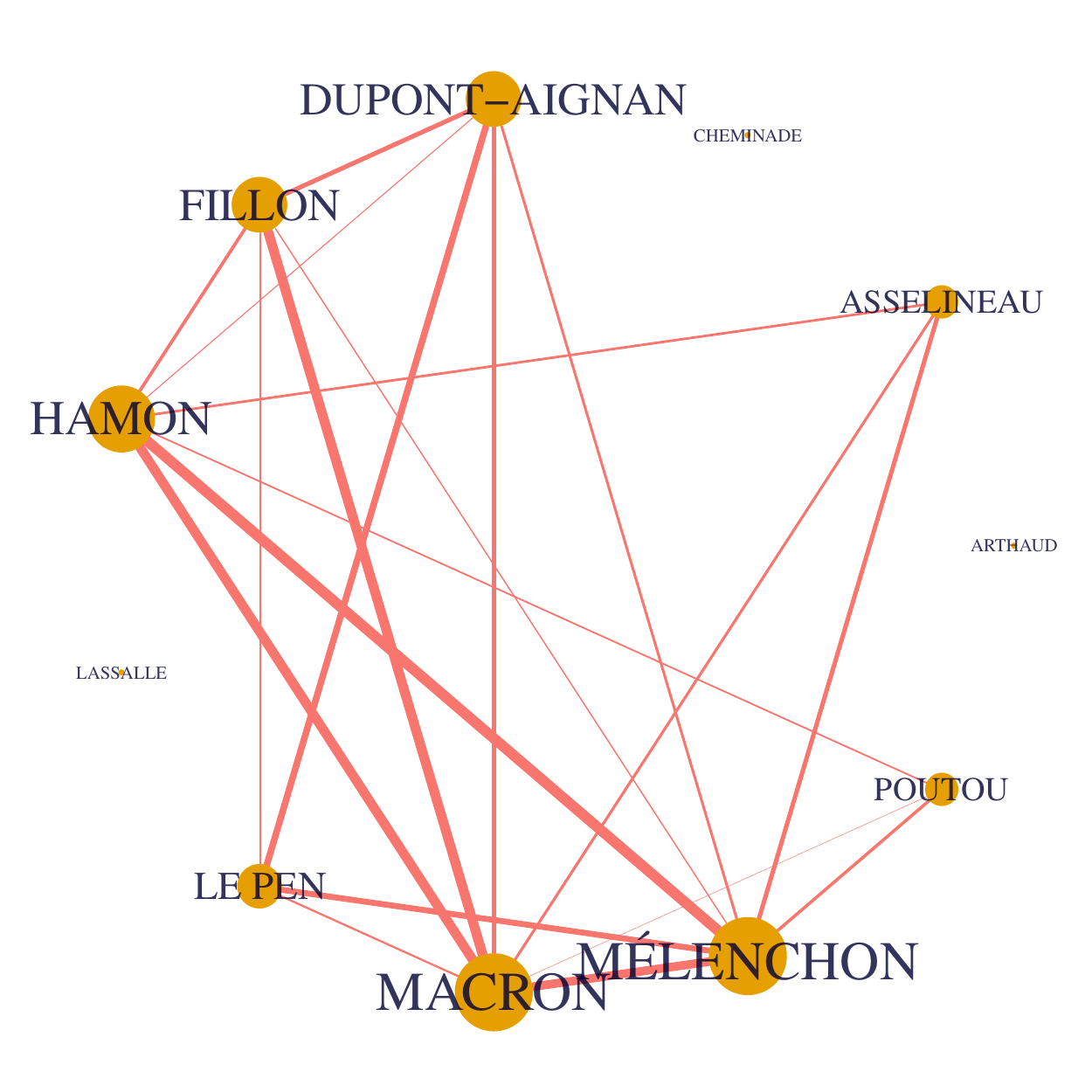} &  
\includegraphics[width=.3\linewidth]{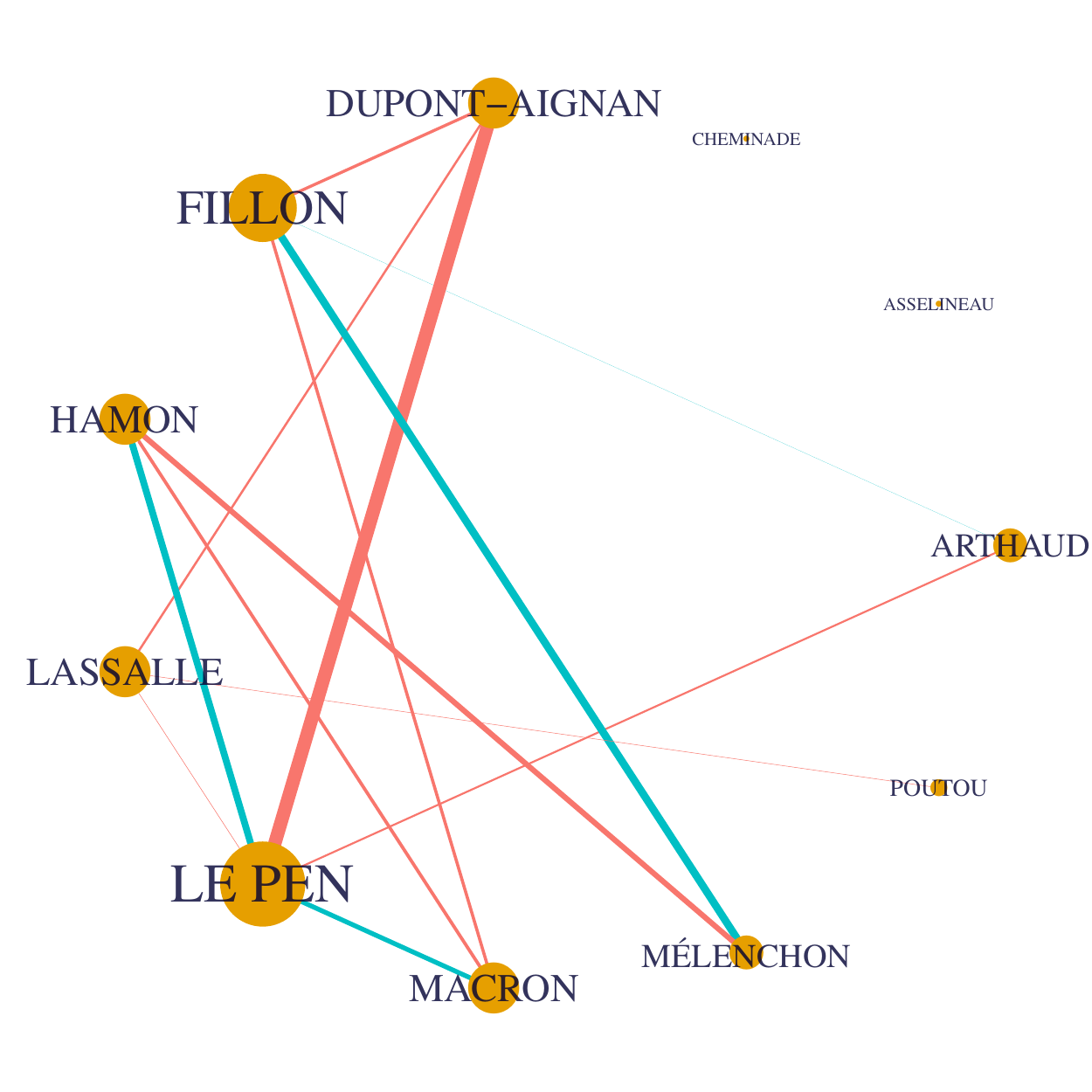} &  
\includegraphics[width=.3\linewidth]{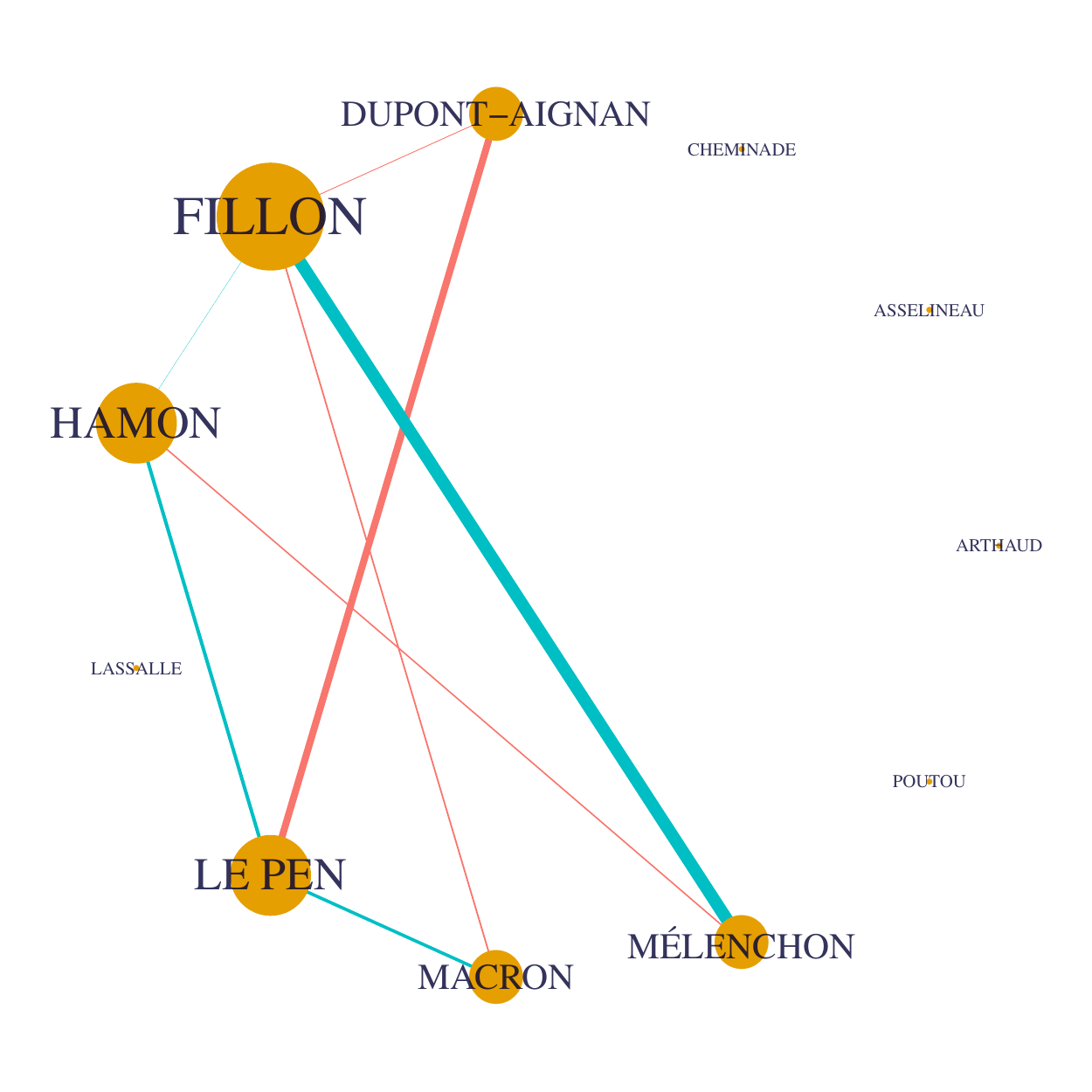}  
\\
    \small \rotatebox{90}{\hspace{.15cm} Latent Positions $\mb{M}$ (PCA)} &  
\includegraphics[width=.3\linewidth, 
keepaspectratio]{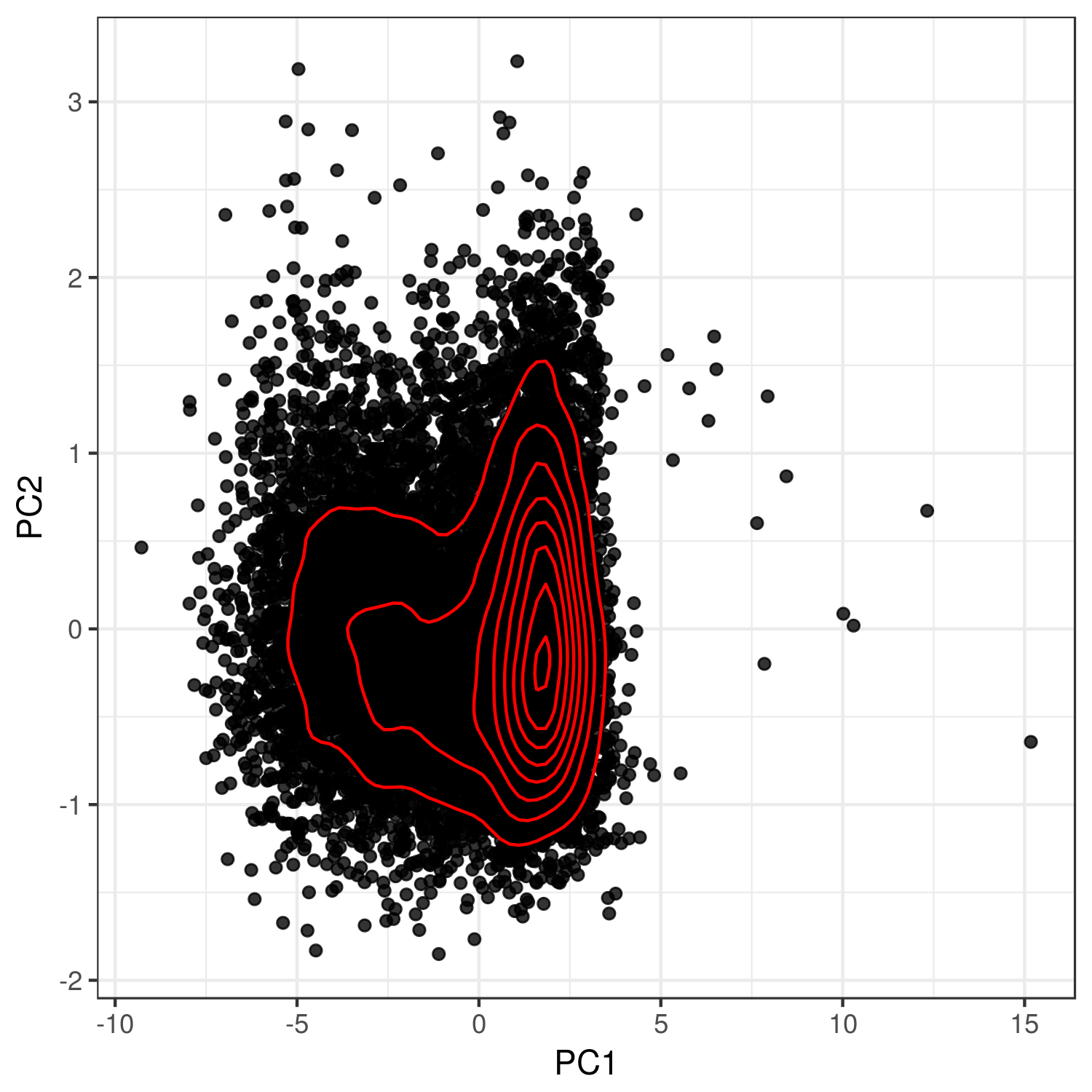} &  
\includegraphics[width=.3\linewidth, 
keepaspectratio]{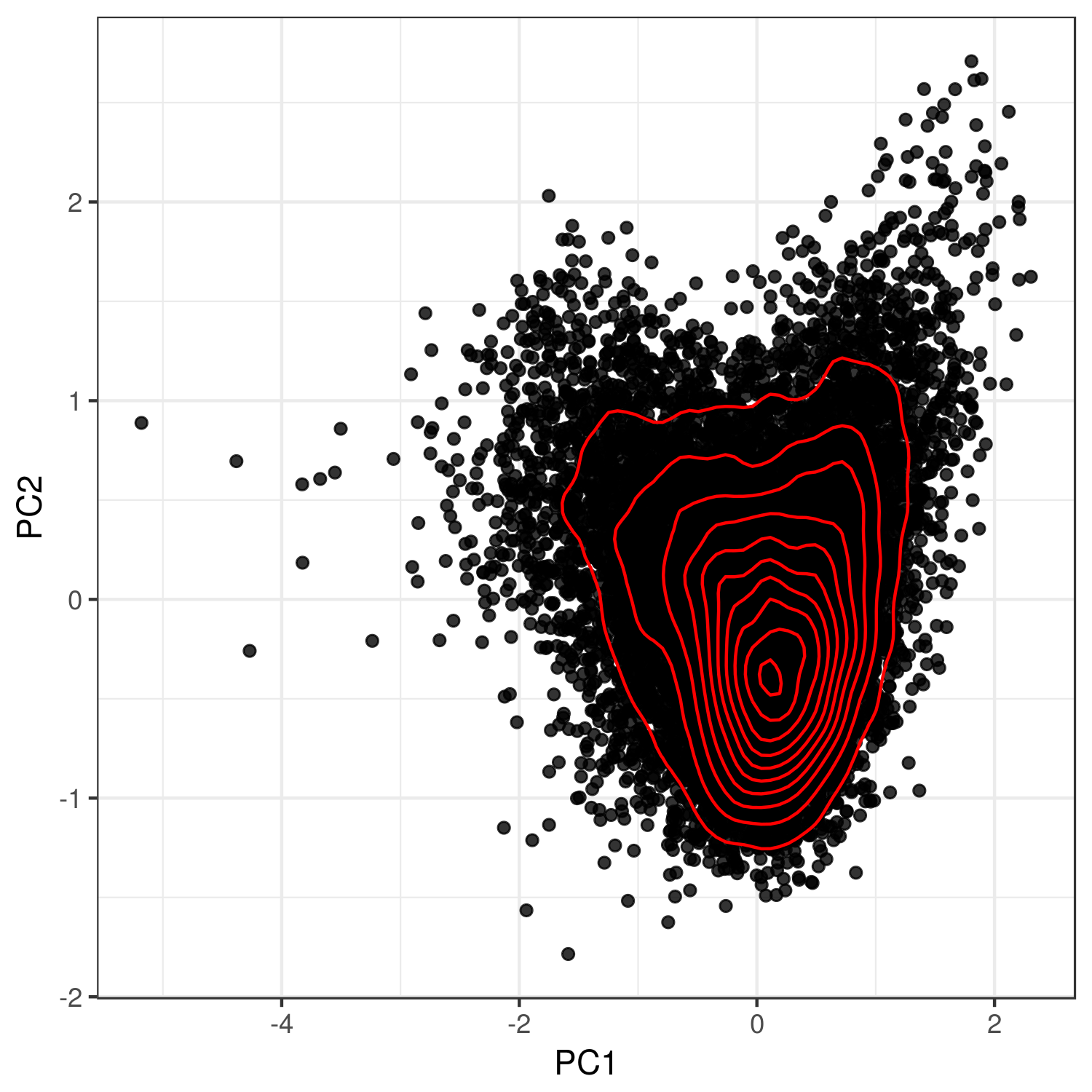} &  
\includegraphics[width=.3\linewidth, 
keepaspectratio]{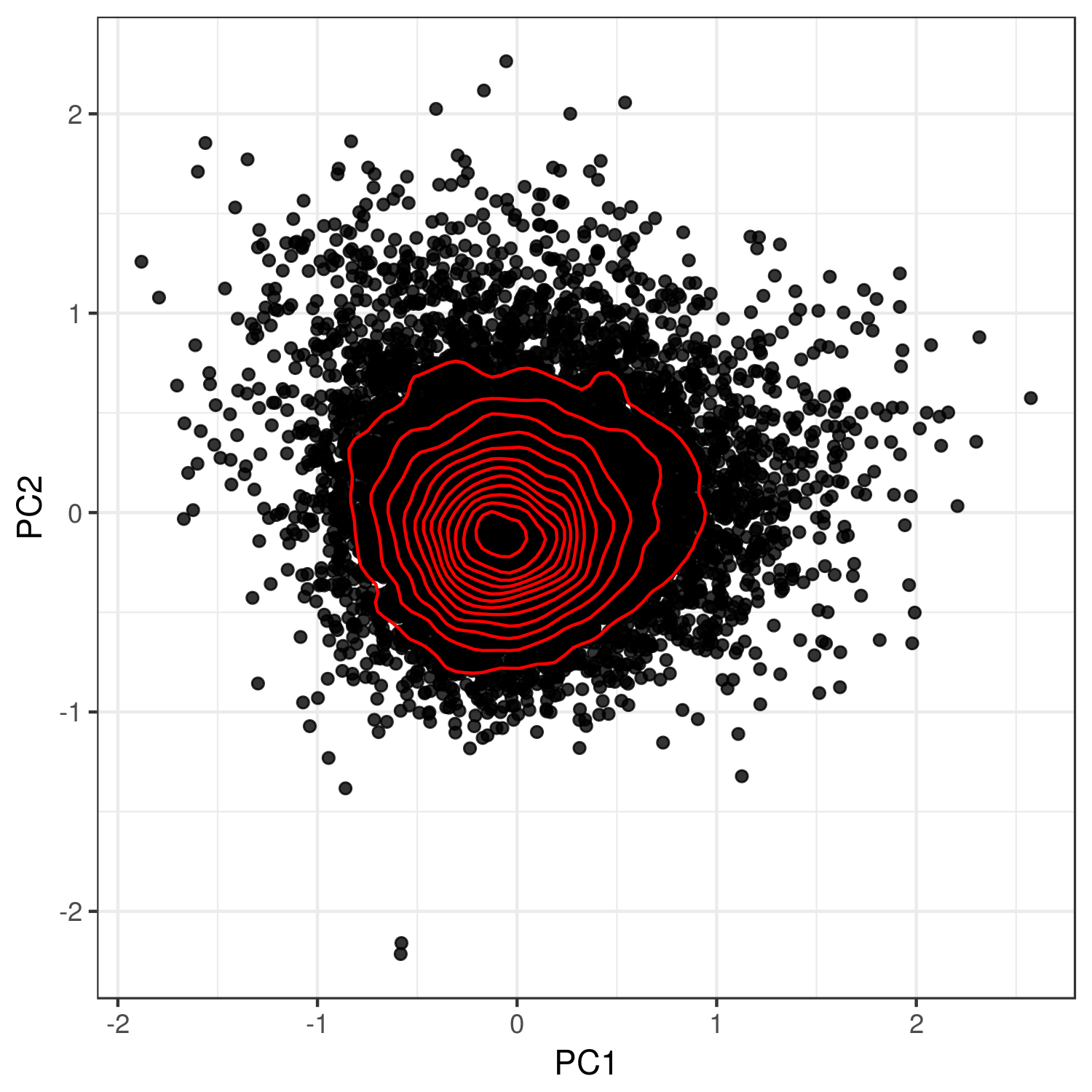}  \\
  \end{tabular}
  \caption{\label{fig:french-elections} Network between candidates of
    last French presidential elections. Top row: Networks inferred
    under different models. Edges represent partial correlations
    $\rho_{ij}$: their thickness is proportional to $|\rho_{ij}|$ and they are
    colored red if $\rho_{ij} > 0$ and blue if $\rho_{ij} < 0$. A
    node's size and label size are proportional to its degree. Bottom
    row: Positions $\mb{M}$ of the polling stations in the Gaussian
    latent space. Since the latent space has dimension $11$, we
    performed a PCA of $\mb{M}$ and show only the principal plane. Red
    lines represent contour lines of the density estimated with a 2D
    kernel. The density should be bivariate Gaussian and departures
    from elliptic curves reveal remaining structure not accounted for
    in the model.}
\end{figure}

\paragraph{The offset matters.} Figure~\ref{fig:french-elections} shows that 
the inclusion of an offset drastically reduces the density of the reconstructed 
network and alters the sign and strength of partial correlations. Failing to 
accoung for varying station sizes leads to a spurious positive partial 
correlations between most candidates: the shift of all stations towards the 
positive orthant in the latent space are mistaken for positive correlations 
between all coordinates. The offset counteracts this by translating back all 
stations towards the origin along the direction $\mathbb{R} \mb{1}$. 

\paragraph{Correcting for geography is 
important.} Figure~\ref{fig:french-elections} also shows that correcting for 
geography also changes the graph but to a lesser extent. However, when we move 
back to the Gaussian latent space and examine the latent positions of the 
polling stations ($\mb{M}$), we do not observe the expected elliptic 
distribution of a multivariate Gaussian (Fig.~\ref{fig:french-elections}, 
left panel). Taking the department of origin into account helps recover 
ellipticity and confirms that geography is indeed a strong structuring factor in 
the latent space. 

\paragraph{Political interactions.} If we consider the network reconstructed 
with the offset and geographic covariate as the most reliable, results show 
that candidate with similar political leaning appeal to the same voters (M. Le 
Pen and N. Dupont-Aignan (both far right), B. Hamon (left) and J.-L. Mélenchon 
(far left), E. Macron (center) and J.-F. Fillon (right)) whereas candidate 
with different leanings appeal to different voters (M. Le Pen versus B. Hamon 
and E. Macron, J.-F. Fillon versus J.-L. Mélenchon). More precisely, a negative 
partial correlation between candidates A and B means, all other things being 
equal, that a high vote for one candidate in a station is correlated to a low 
vote for the other. 

This may explain the absence of negative correlation between far left and far 
right: although their electorates may differ, they vote in the same stations. 
Similarly, the fact that the positive partial correlation between E. Macron and 
B. Hamon disappears when controlling for geography means that they have high 
voter shares in the same departments but not necessarily in the same polling 
stations. This is confirmed by the high correlation ($0.76$) of their 
respective regression coefficients across departments. 


\subsection{Oak mildew} \label{sec:mildew}

The metagenomic dataset introduced in \cite{JFS16} consists of
microbial communities sampled on the surface of oak leaves (the
samples). The leaves were collected on trees with different resistance
levels to the fungal pathogenic species \textsl{E. alphitoides},
responsible for the oak powdery mildew. Table \ref{tab:OTU} provides
the available classification information about the bacterial and
fungal OTU appearing in at least one network inferred in our
analysis. Unfortunately, not all OTU can been identified at the
species level and some OTU are not related to any known species.  In
the following, we consider two groups of samples labeled by
\citeauthor{JFS16}: $n_r = 39$ resistant samples (where
\textsl{E. alphitoides} was essentially absent) and $n_s = 39$
susceptible samples (where a significant activity of
\textsl{E. alphitoides} was detected).  In addition to the sampling
tree, several covariates, all thought to potentially structure the
community, were measured for each leaf: orientation, distance to
trunk, distance to ground, distance to base. After sequencing,
clustering into operating taxononomic units (or OTU -- a proxy for
species), and a final filtering of the identified bacterial and fungal
communities with too few reads, the total number of species considered
is $p = 114$ OTUs in this data set (66 bacterial ones and 48 fungal
ones, including \textsl{E. alphitoides}).

\begin{table}[ht]
\centering
\begingroup\small
\begin{tabular}{lllll}
  \hline
Type & OTU &  Family & Genus & Species \\
\specialrule{.1em}{.05em}{.5em}
Fungi &  f1 & Dermateaceae & Naevala & Naevala minutissima \\ 
&  f3 & -- & -- & -- \\ 
&  f4 & Erysiphaceae & Erysiphe & Erysiphe hypophylla \\ 
&  f8 & Hyaloscyphaceae & Catenulifera & Catenulifera brevicollaris \\ 
&  f10 & -- & -- & -- \\ 
&  f12 & Amphisphaeriaceae & Monochaetia & Monochaetia kansensis \\ 
&  f17 & Herpotrichiellaceae & Cyphellophora & Cyphellophora hylomeconis \\ 
&  f19 & -- & -- & -- \\ 
&  f25 & unidentified & Cryptococcus & Cryptococcus magnus \\ 
&  f27 & unidentified & Strelitziana & Strelitziana mali \\ 
&  f29 & Mycosphaerellaceae & Xenosonderhenia & Xenosonderhenia syzygii \\ 
&  f32 & -- & -- & -- \\ 
&  f39 & -- & -- & -- \\ 
&  f1085 & Mycosphaerellaceae & Mycosphaerella & Mycosphaerella marksii \\ 
&  f1090 & Herpotrichiellaceae & Cyphellophora & Cyphellophora hylomeconis \\ 
&  f1278 & Mycosphaerellaceae & Mycosphaerella & Mycosphaerella punctiformis \\ 
&  Ea & Erysiphaceae & Erysiphe & Erysiphe alphitoides \\ 
Bacteria & b13 &  Oxalobacteraceae & -- & -- \\ 
&  b153 &  Oxalobacteraceae & -- & -- \\ 
&  b21 &  Pseudomonadaceae &  Pseudomonas & -- \\ 
&  b25 &  Enterobacteriaceae & -- & -- \\ 
&  b26 &  Oxalobacteraceae & -- & -- \\ 
&  b33 &  Microbacteriaceae &  Rathayibacter & -- \\ 
&  b364 &  Oxalobacteraceae & -- & -- \\ 
&  b37 &  Beijerinckiaceae &  Beijerinckia & -- \\ 
&  b44 & -- & -- & -- \\ 
&  b60 & -- & -- & -- \\ 
   \hline
\end{tabular}
\endgroup
\caption{Type of microorganism (bacteria or fungi) and higher level taxonomic 
assignments (family, 
genus and species) of the 27 operational taxonomic units (OTUs) interacting in 
the inferred microbial networks. Unknown assignments at a given rank are 
reported as '--'. \label{tab:OTU}} 
\end{table}

Our aim here is to unravel the association between the different
microbial and fungal species by reconstructing the ecological
network. Obviously, we are especially interested in the interactions
between \textsl{E. alphitoides} and the other species. We emphasize
that unlike \texttt{SPiEC-Easi} or \texttt{sparCC}, that are limited
to interactions between bacteria or between fungi due their
normalisation step, we can actually investigate interactions between
bacterial and fungi \textsl{E. alphitoides} although the
sequencing depths differ for each type. A similar target was already at the
core of \citeauthor{JFS16}'s work. However our approach differs from a
methodological view-point as we jointly estimate the effect of the
covariates $\mb{B}$ and the dependency structure $\mb{\Omega}$ while
they only corrected the observed counts for the effect of the
covariates using a regression model before feeding the residuals from
that regression to a network inference method. This two-steps
procedure fails to account for the fact that $\mb{B}$ is estimated and
to propagate uncertainty from the first step to the second
one. Moreover, \citeauthor{JFS16} focused their study on the set of
susceptible samples, while we propose here to infer three networks:
one for susceptible samples, one for resistant samples and one for
merged samples. By these means, we hope to obtain a more thorough map
of interactions between the pathogen and its ecosystem.

The three PLN models respectively including the susceptible, the
resistant and both samples were defined as follows: for the
susceptible and the resistant models, we applied \texttt{PLNnetwork}
by including simple effects of the orientation and of the distance to
the trunk (the distances to the ground and to the base were highly
correlated with the former, and we used it as a representative of
these three covariates). For the model merging all the samples, we
added the covariate describing the tree status (either resistant or
susceptible), with both simple effects and interactions with the two
other covariates (orientation and distance to trunk). These two
approaches -- separating or merging the samples -- address different
yet complementary goals: by separating the samples, we assume that the
two underlying networks (and thus covariances) are different and need
a specific analysis; the counterpart that merges all samples aims to
render a synergistic network that encompasses important interactions
from both situations after correction of the mean effects due to the
tree status (resistant or susceptible).

\def\netsize{.5\textwidth}
\begin{figure}[htbp!]
  \centering

  \begin{tabular}{@{}cc@{}}
    {\small resistant samples} & {\small susceptible samples} \\ 
    \includegraphics[width=\netsize]{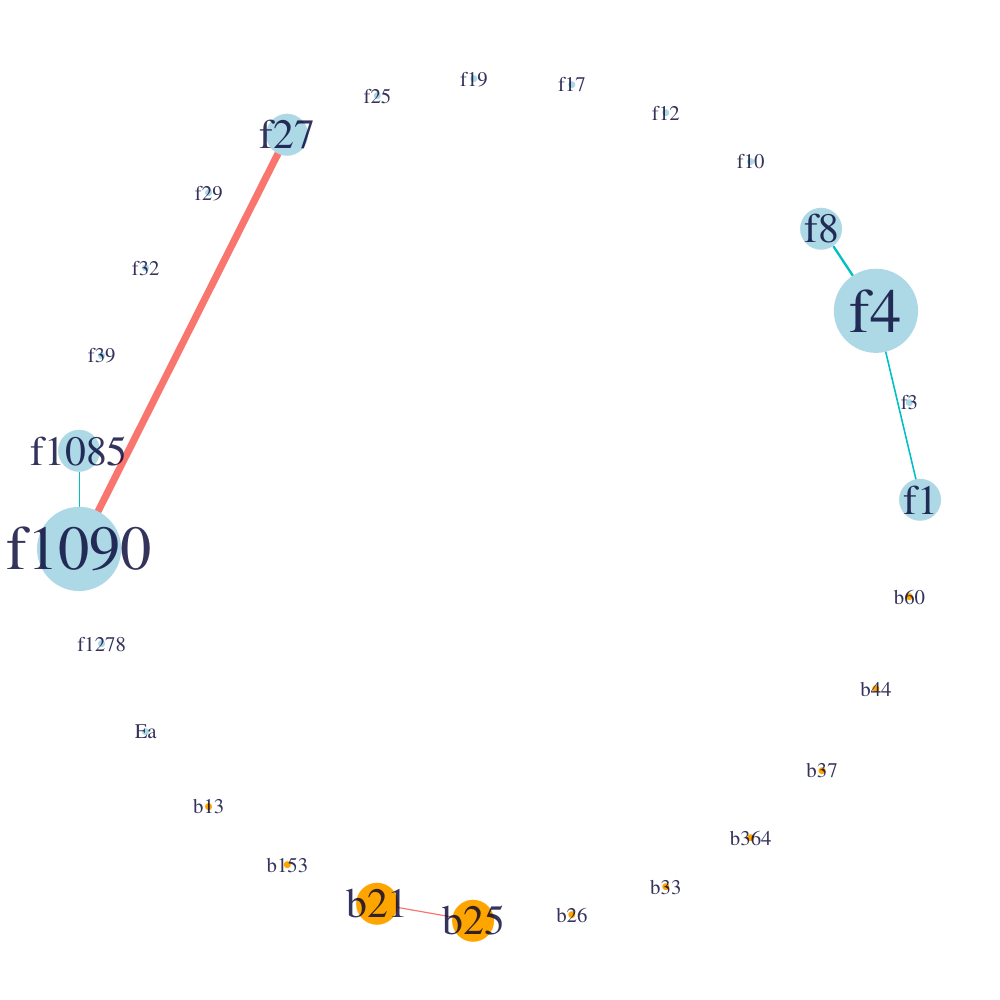}
  & \includegraphics[width=\netsize]{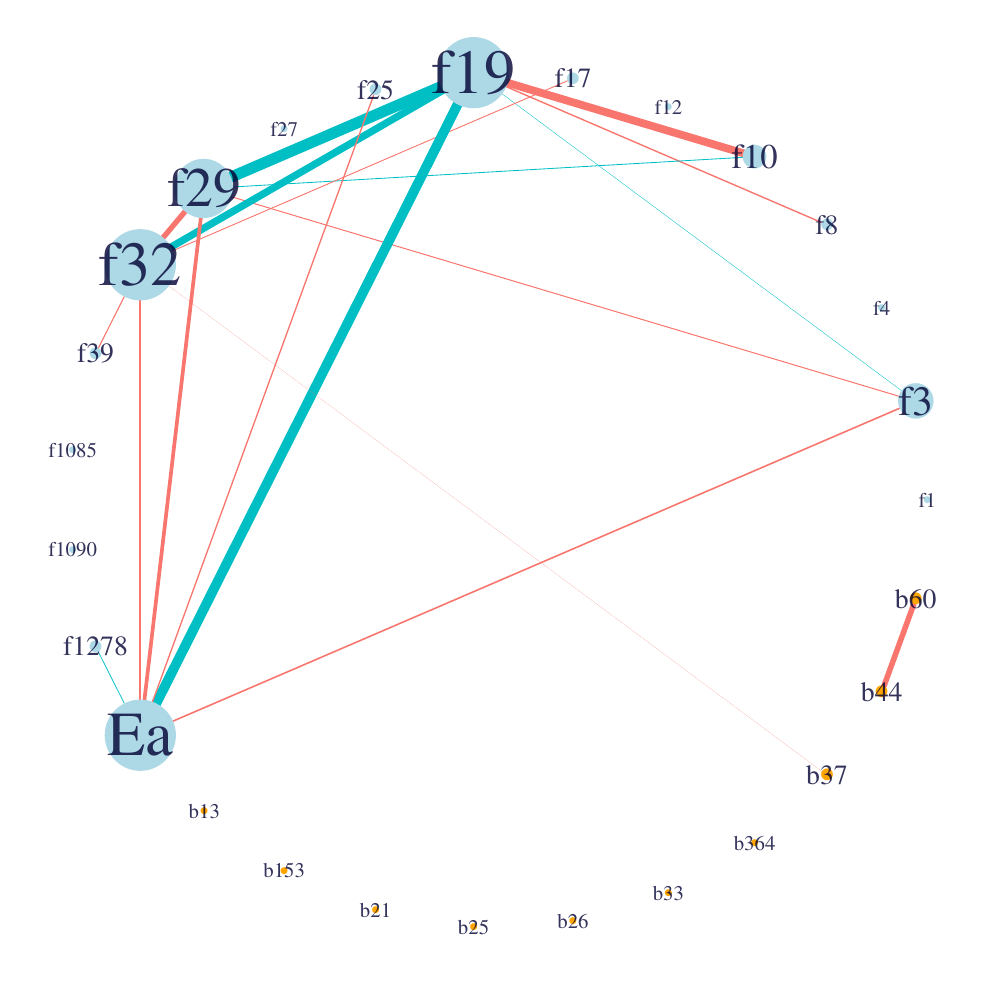} \\
   {\small samples from both origins} & {\small regression coefficient for orientation} \\
    \includegraphics[width=\netsize]{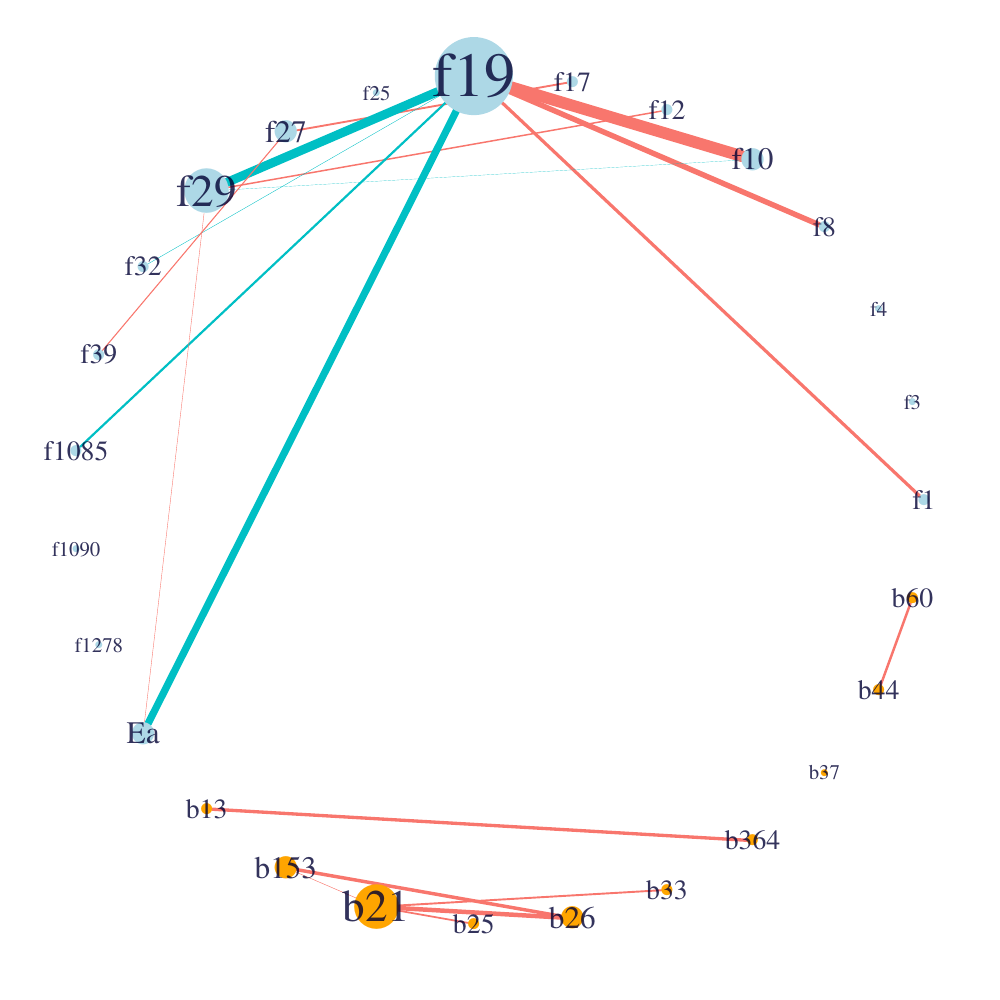}
  &   \includegraphics[width=.5\textwidth]{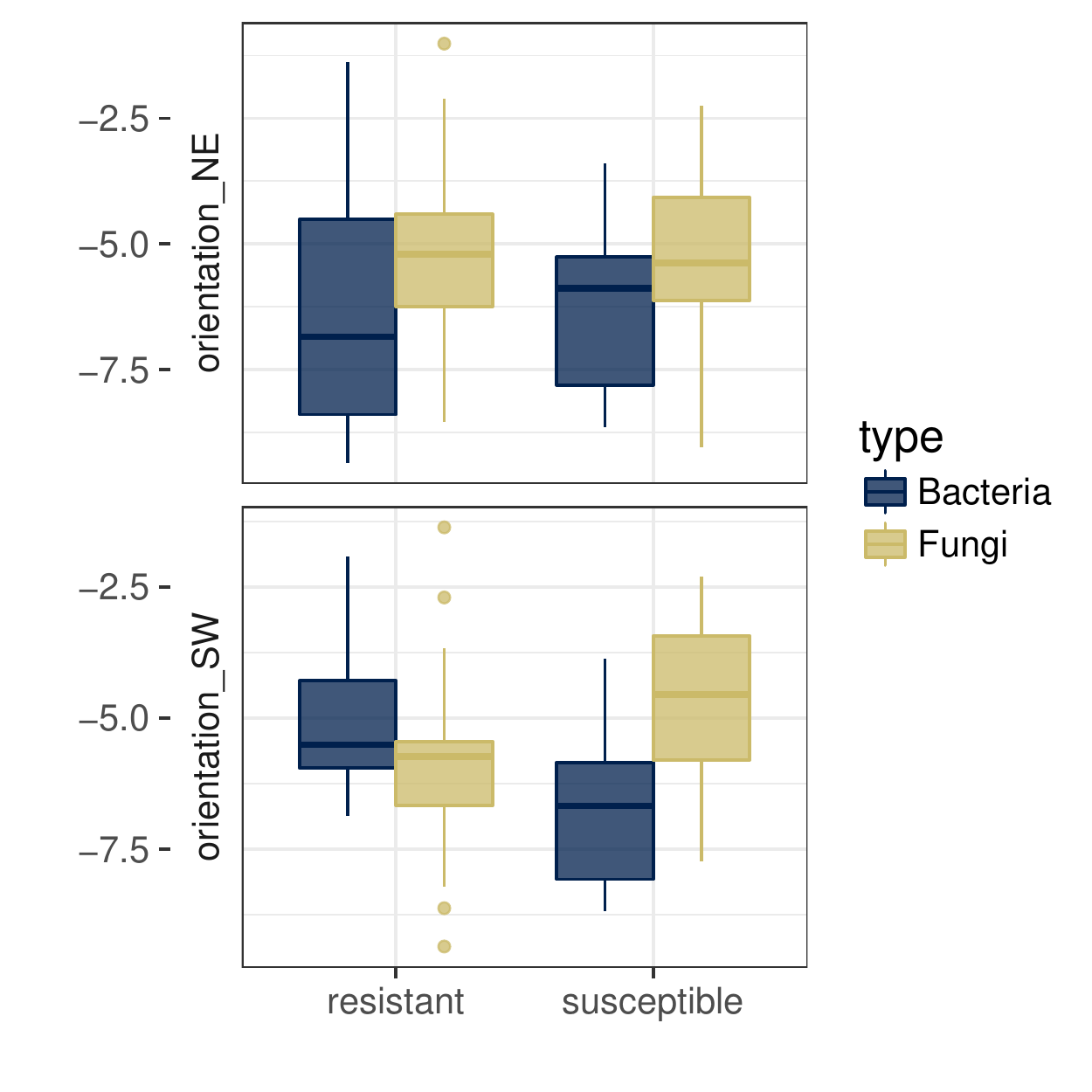} \\     
  \end{tabular}

  \caption{Oak mildew network analysis: networks inferred by
    \texttt{PLNnetwork} and selected by StARS for a stability of
    $0.995$.  Each network correspond to networks inferred using
    samples respectively from the resistant tree, the susceptible tree
    and when merging samples from both origins. Blue vertices
    represent Fungi; orange vertices represent bacteria. Edges
    represent partial correlations $\rho_{ij}$: edge thickness is
    proportional to $|\rho_{ij}|$ and are colored red if
    $\rho_{ij} > 0$ and blue if $\rho_{ij} < 0$. A node's size and
    label size are proportional to its degree. Only nodes having at
    least one edge among the three networks are included in the
    plots. Inset: Boxplot of regression coefficient of abundances against
    orientation.}
  \label{fig:network_mildew}
\end{figure}

Before getting into the interpretation of the results in terms of
species interactions, we remind that the PLN models also enables to
measure the effect of the covariates on each species. The bottom right
panel of Figure \ref{fig:network_mildew} displays the distribution of
the regression parameters of the two orientation indicators (NE $=$
north-east and SW $=$ south-west), in each tree, across each species
type. We do not discuss extensively these results but one may observe
a strong interaction between SW orientation and tree type on both
fungi and bacteria: bacteria are notably depleted in leaves facing SW
in susceptible trees.

We now focus on the results of our analysis in terms of networks in
Figure~\ref{fig:network_mildew}.  All networks inferred with
\texttt{PLNnetwork} where selected with StARS on a 50-size grid of
penalties, using a high stability level of $1-2\beta = 0.995$ to
drastically limit the number of false positive edges. The top row
displays the resistant and susceptible networks, showing very
different patterns, while the consensus network seems to catch
features from both of them.  In the susceptible network, \textsl{E.
  alphitoides} is identified as ($i$) antagonist to fungi f1278, from
the \textit{Mycosphaerella punctiformis} species, which colonizes
living oak leaves asymptomatically and may prevent infection by
\textsl{E. alphitoides} and ($ii$) mutualist to fungi f29, from the
\textit{Xenosonderhenia syzygii} species, usually found in leaf spots,
common on weakened and senescent leaves. The other mutualists of
\textsl{E. alphitoides} unfortunately belong to unknown species and no
similar observations can be made.  Interestingly, in the susceptible
network, the pathogen has less interactions than fungi f19, but is
connected to it, whereas both have few connections in the resistant
network.  As \textsl{E. alphitoides} is known to be responsible for
the mildew disease, the comparison of these networks suggest that its
pathogenic effect is partially mediated by f19. In addition to the
direct effect of the pathogen on a small set of species, its
(negative) effect on fungi f19, which seems to play a central role in
the phyllosphere, leverages its impact on the whole system.  Finally,
the consensus network encompassing both sources of samples resembles
the susceptible network, with some notable discrepancies: a cluster
composed by bacterial species b21, b25, b26, b153 and to a lesser
extent b33 is found in the consensus network, which was only incipient
in the resistant network. This is probably due to the gain in power
induced by a larger sample-size.

%


\paragraph{Acknowledgement} We thank Charlie Pauvert for beta-testing
our implementation and for his feedback. This work was funded by
projects LearnBioControl, Brassica-Dev from the Inra MEM
metaprogramme.

\bibliography{VEM-PLNnetwork}

\begin{thebibliography}{48}
\providecommand{\natexlab}[1]{#1}
\providecommand{\url}[1]{\texttt{#1}}
\expandafter\ifx\csname urlstyle\endcsname\relax
  \providecommand{\doi}[1]{doi: #1}\else
  \providecommand{\doi}{doi: \begingroup \urlstyle{rm}\Url}\fi

\bibitem[Agresti(1996)]{Agr96}
A.~Agresti.
\newblock \emph{An introduction to categorical data analysis}, volume 135.
\newblock Wiley New York, 1996.

\bibitem[Aitchison and Ho(1989)]{AiH89}
J.~Aitchison and C.~Ho.
\newblock The multivariate poisson-log normal distribution.
\newblock \emph{Biometrika}, 76\penalty0 (4):\penalty0 643--653, 1989.

\bibitem[Allen and Liu(2012)]{allen2012log}
G.~I. Allen and Z.~Liu.
\newblock A log-linear graphical model for inferring genetic networks from
  high-throughput sequencing data.
\newblock In \emph{Bioinformatics and Biomedicine (BIBM), 2012 IEEE
  International Conference on}, pages 1--6. IEEE, 2012.

\bibitem[Banerjee et~al.(2008)Banerjee, Ghaoui, and d'Aspremont]{banerjee2008}
O.~Banerjee, L.~E. Ghaoui, and A.~d'Aspremont.
\newblock Model selection through sparse maximum likelihood estimation for
  multivariate gaussian or binary data.
\newblock \emph{Journal of Machine learning research}, 9\penalty0
  (Mar):\penalty0 485--516, 2008.

\bibitem[Besag(1974)]{Besag1974}
J.~Besag.
\newblock Spatial interaction and the statistical analysis of lattice systems.
\newblock \emph{Journal of the Royal Statistical Society. Series B
  (Methodological)}, pages 192--236, 1974.

\bibitem[Biswas et~al.(2016)Biswas, Mcdonald, Lundberg, Dangl, and
  Jojic]{Biswas2016}
S.~Biswas, M.~Mcdonald, D.~S. Lundberg, J.~L. Dangl, and V.~Jojic.
\newblock Learning microbial interaction networks from metagenomic count data.
\newblock \emph{Journal of Computational Biology}, 23\penalty0 (6):\penalty0
  526--535, 2016.
\newblock \doi{https://doi.org/10.1089/cmb.2016.0061}.
\newblock URL \url{http://online.liebertpub.com/doi/10.1089/cmb.2016.0061}.

\bibitem[Cai et~al.(2011)Cai, Liu, and Luo]{cai2011constrained}
T.~Cai, W.~Liu, and X.~Luo.
\newblock A constrained $\ell$1 minimization approach to sparse precision
  matrix estimation.
\newblock \emph{Journal of the American Statistical Association}, 106\penalty0
  (494):\penalty0 594--607, 2011.

\bibitem[Chandrasekaran et~al.(2012)Chandrasekaran, Parrilo, and
  Willsky]{CPW12}
V.~Chandrasekaran, P.~A. Parrilo, and A.~S. Willsky.
\newblock Latent variable graphical model selection via convex optimization.
\newblock \emph{The Annals of Statistics}, 40\penalty0 (4):\penalty0
  1935--1967, 2012.

\bibitem[Chen and Chen(2008)]{chen2008extended}
J.~Chen and Z.~Chen.
\newblock Extended bayesian information criteria for model selection with large
  model spaces.
\newblock \emph{Biometrika}, 95\penalty0 (3):\penalty0 759--771, 2008.

\bibitem[Chib and Greenberg(1995)]{ChG95}
S.~Chib and E.~Greenberg.
\newblock Understanding the metropolis-hastings algorithm.
\newblock \emph{The American Statistician}, 49\penalty0 (4):\penalty0 327--335,
  1995.

\bibitem[Chiquet et~al.(to appear)Chiquet, Mariadassous, and Robin]{CMR17}
J.~Chiquet, M.~Mariadassous, and S.~Robin.
\newblock Variational inference for probabilistic poisson pca.
\newblock \emph{Ann. Appl. Statist.}, to appear.

\bibitem[Davis and Goadrich(2006)]{davis2006relationship}
J.~Davis and M.~Goadrich.
\newblock The relationship between precision-recall and roc curves.
\newblock In \emph{Proceedings of the 23rd international conference on Machine
  learning}, pages 233--240. ACM, 2006.

\bibitem[Dempster et~al.(1977)Dempster, Laird, and Rubin]{DLR77}
A.~P. Dempster, N.~M. Laird, and D.~B. Rubin.
\newblock Maximum likelihood from incomplete data via the {EM} algorithm.
\newblock \emph{J. R. Statist. Soc. B}, 39:\penalty0 1--38, 1977.

\bibitem[Drew et~al.(2017)Drew, Müller, Bonneau, and Marcotte]{Drew2017}
K.~Drew, C.~L. Müller, R.~Bonneau, and E.~M. Marcotte.
\newblock Identifying direct contacts between protein complex subunits from
  their conditional dependence in proteomics datasets.
\newblock \emph{PLOS Computational Biology}, 13\penalty0 (10):\penalty0 1--23,
  10 2017.
\newblock \doi{10.1371/journal.pcbi.1005625}.
\newblock URL \url{https://doi.org/10.1371/journal.pcbi.1005625}.

\bibitem[Fiers et~al.(2018)Fiers, Minnoye, Aibar, Bravo González-Blas,
  Kalender~Atak, and Aerts]{Fiers2018}
M.~W. E.~J. Fiers, L.~Minnoye, S.~Aibar, C.~Bravo González-Blas,
  Z.~Kalender~Atak, and S.~Aerts.
\newblock Mapping gene regulatory networks from single-cell omics data.
\newblock \emph{Briefings in Functional Genomics}, page elx046, 2018.
\newblock \doi{10.1093/bfgp/elx046}.
\newblock URL \url{http://dx.doi.org/10.1093/bfgp/elx046}.

\bibitem[Fossheim et~al.(2006)Fossheim, Nilssen, and Aschan]{FNA06}
M.~Fossheim, E.~M. Nilssen, and M.~Aschan.
\newblock Fish assemblages in the barents sea.
\newblock \emph{Marine Biology Research}, 2\penalty0 (4):\penalty0 260--269,
  2006.

\bibitem[Foygel and Drton(2010)]{foygel2010extended}
R.~Foygel and M.~Drton.
\newblock Extended bayesian information criteria for gaussian graphical models.
\newblock In \emph{Advances in neural information processing systems}, pages
  604--612, 2010.

\bibitem[Friedman and Alm(2012)]{Friedman2012}
J.~Friedman and E.~J. Alm.
\newblock Inferring correlation networks from genomic survey data.
\newblock \emph{PLOS Computational Biology}, 8\penalty0 (9):\penalty0 1--11, 09
  2012.
\newblock \doi{10.1371/journal.pcbi.1002687}.
\newblock URL \url{http://dx.doi.org/10.1371%2Fjournal.pcbi.1002687}.

\bibitem[Friedman et~al.(2008)Friedman, Hastie, and Tibshirani]{FHT08}
J.~Friedman, T.~Hastie, and R.~Tibshirani.
\newblock Sparse inverse covariance estimation with the graphical lasso.
\newblock \emph{Biostatistics}, 9\penalty0 (3):\penalty0 432--441, 2008.

\bibitem[Gallopin et~al.(2013)Gallopin, Rau, and Jaffrézic]{Gallopin2013}
M.~Gallopin, A.~Rau, and F.~Jaffrézic.
\newblock A hierarchical poisson log-normal model for network inference from
  rna sequencing data.
\newblock \emph{PLOS ONE}, 8\penalty0 (10):\penalty0 1--9, 10 2013.
\newblock \doi{10.1371/journal.pone.0077503}.
\newblock URL \url{https://doi.org/10.1371/journal.pone.0077503}.

\bibitem[Greenacre(2013)]{Gre13}
M.~Greenacre.
\newblock Fuzzy coding in constrained ordinations.
\newblock \emph{Ecology}, 94\penalty0 (2):\penalty0 280--286, 2013.

\bibitem[Greenacre and Primicerio(2014)]{GrP14}
M.~Greenacre and R.~Primicerio.
\newblock \emph{Multivariate analysis of ecological data}.
\newblock Fundacion BBVA, 2014.

\bibitem[Harris(2016)]{Harris2016}
D.~J. Harris.
\newblock Inferring species interactions from co-occurrence data with markov
  networks.
\newblock \emph{Ecology}, 97\penalty0 (12):\penalty0 3308--3314, 2016.
\newblock ISSN 1939-9170.
\newblock \doi{10.1002/ecy.1605}.
\newblock URL \url{http://dx.doi.org/10.1002/ecy.1605}.

\bibitem[Imbert et~al.(2017)Imbert, Valsesia, Le~Gall, Armenise, Lefebvre,
  Gourraud, Viguerie, and Villa-Vialaneix]{imbert2017multiple}
A.~Imbert, A.~Valsesia, C.~Le~Gall, C.~Armenise, G.~Lefebvre, P.-A. Gourraud,
  N.~Viguerie, and N.~Villa-Vialaneix.
\newblock Multiple hot-deck imputation for network inference from rna
  sequencing data.
\newblock \emph{Bioinformatics}, page btx819, 2017.

\bibitem[Inouye et~al.(2017)Inouye, Yang, Allen, and Ravikumar]{IYA16}
D.~I. Inouye, E.~Yang, G.~I. Allen, and P.~Ravikumar.
\newblock A review of multivariate distributions for count data derived from
  the poisson distribution.
\newblock \emph{Wiley Interdisciplinary Reviews: Computational Statistics},
  9\penalty0 (3), 2017.

\bibitem[Jakuschkin et~al.(2016)Jakuschkin, Fievet, Schwaller, Fort, Robin, and
  Vacher]{JFS16}
B.~Jakuschkin, V.~Fievet, L.~Schwaller, T.~Fort, C.~Robin, and C.~Vacher.
\newblock Deciphering the pathobiome: Intra-and interkingdom interactions
  involving the pathogen {E}rysiphe alphitoides.
\newblock \emph{Microbial ecology}, pages 1--11, 2016.

\bibitem[Johnson(2011)]{nlopt}
S.~G. Johnson.
\newblock \emph{The NLopt nonlinear-optimization package}, 2011.
\newblock URL \url{http://ab-initio.mit.edu/nlopt}.

\bibitem[Karlis(2005)]{Kar05}
D.~Karlis.
\newblock {EM} algorithm for mixed {P}oisson and other discrete distributions.
\newblock \emph{Astin bulletin}, 35\penalty0 (01):\penalty0 3--24, 2005.

\bibitem[Khare et~al.(2015)Khare, Oh, and Rajaratnam]{khare2015convex}
K.~Khare, S.-Y. Oh, and B.~Rajaratnam.
\newblock A convex pseudolikelihood framework for high dimensional partial
  correlation estimation with convergence guarantees.
\newblock \emph{Journal of the Royal Statistical Society: Series B (Statistical
  Methodology)}, 77\penalty0 (4):\penalty0 803--825, 2015.

\bibitem[Kurtz et~al.(2015)Kurtz, M{\"{u}}ller, Miraldi, Littman, Blaser, and
  Bonneau]{Kurtz2015}
Z.~D. Kurtz, C.~L. M{\"{u}}ller, E.~R. Miraldi, D.~R. Littman, M.~J. Blaser,
  and R.~A. Bonneau.
\newblock Sparse and compositionally robust inference of microbial ecological
  networks.
\newblock \emph{PLoS Comput Biol}, 11\penalty0 (5):\penalty0 e1004226, May
  2015.
\newblock \doi{10.1371/journal.pcbi.1004226}.
\newblock URL \url{http://dx.doi.org/10.1371/journal.pcbi.1004226}.

\bibitem[Lauritzen(1996)]{Lau96}
S.~L. Lauritzen.
\newblock \emph{Graphical Models}.
\newblock Oxford Statistical Science Series. Clarendon Press, 1996.

\bibitem[Lima-Mendez et~al.(2015)Lima-Mendez, Faust, Henry, Decelle, Colin,
  Carcillo, Chaffron, Ignacio-Espinosa, Roux, Vincent, Bittner, Darzi, Wang,
  Audic, Berline, Bontempi, Cabello, Coppola, Cornejo-Castillo, d'Ovidio,
  De~Meester, Ferrera, Garet-Delmas, Guidi, Lara, Pesant, Royo-Llonch, Salazar,
  S{\'a}nchez, Sebastian, Souffreau, Dimier, Picheral, Searson, Kandels-Lewis,
  Gorsky, Not, Ogata, Speich, Stemmann, Weissenbach, Wincker, Acinas, Sunagawa,
  Bork, Sullivan, Karsenti, Bowler, de~Vargas, and Raes]{Lima-Mendez2015}
G.~Lima-Mendez, K.~Faust, N.~Henry, J.~Decelle, S.~Colin, F.~Carcillo,
  S.~Chaffron, J.~C. Ignacio-Espinosa, S.~Roux, F.~Vincent, L.~Bittner,
  Y.~Darzi, J.~Wang, S.~Audic, L.~Berline, G.~Bontempi, A.~M. Cabello,
  L.~Coppola, F.~M. Cornejo-Castillo, F.~d'Ovidio, L.~De~Meester, I.~Ferrera,
  M.-J. Garet-Delmas, L.~Guidi, E.~Lara, S.~Pesant, M.~Royo-Llonch, G.~Salazar,
  P.~S{\'a}nchez, M.~Sebastian, C.~Souffreau, C.~Dimier, M.~Picheral,
  S.~Searson, S.~Kandels-Lewis, G.~Gorsky, F.~Not, H.~Ogata, S.~Speich,
  L.~Stemmann, J.~Weissenbach, P.~Wincker, S.~G. Acinas, S.~Sunagawa, P.~Bork,
  M.~B. Sullivan, E.~Karsenti, C.~Bowler, C.~de~Vargas, and J.~Raes.
\newblock Determinants of community structure in the global plankton
  interactome.
\newblock \emph{Science}, 348\penalty0 (6237), 2015.
\newblock ISSN 0036-8075.
\newblock \doi{10.1126/science.1262073}.
\newblock URL \url{http://science.sciencemag.org/content/348/6237/1262073}.

\bibitem[Liu et~al.(2009)Liu, Lafferty, and Wasserman]{liu2009nonparanormal}
H.~Liu, J.~Lafferty, and L.~Wasserman.
\newblock The nonparanormal: Semiparametric estimation of high dimensional
  undirected graphs.
\newblock \emph{Journal of Machine Learning Research}, 10\penalty0
  (Oct):\penalty0 2295--2328, 2009.

\bibitem[Liu et~al.(2010{\natexlab{a}})Liu, Roeder, and Wasserman]{Liu2010}
H.~Liu, K.~Roeder, and L.~Wasserman.
\newblock Stability approach to regularization selection (stars) for high
  dimensional graphical models.
\newblock In \emph{Proceedings of the 23rd International Conference on Neural
  Information Processing Systems - Volume 2}, NIPS'10, pages 1432--1440, USA,
  2010{\natexlab{a}}. Curran Associates Inc.
\newblock URL \url{http://dl.acm.org/citation.cfm?id=2997046.2997056}.

\bibitem[Liu et~al.(2010{\natexlab{b}})Liu, Roeder, and
  Wasserman]{liu2010stability}
H.~Liu, K.~Roeder, and L.~Wasserman.
\newblock Stability approach to regularization selection (stars) for high
  dimensional graphical models.
\newblock In \emph{Advances in neural information processing systems}, pages
  1432--1440, 2010{\natexlab{b}}.

\bibitem[Ma et~al.(2008)Ma, Kockelman, and Damien]{MKD08}
J.~Ma, K.~M. Kockelman, and P.~Damien.
\newblock A multivariate poisson-lognormal regression model for prediction of
  crash counts by severity, using bayesian methods.
\newblock \emph{Accident Analysis \& Prevention}, 40\penalty0 (3):\penalty0
  964--975, 2008.

\bibitem[Meinshausen and Bühlmann(2006)]{Meinshausen2006}
N.~Meinshausen and P.~Bühlmann.
\newblock High-dimensional graphs and variable selection with the lasso.
\newblock \emph{Ann. Statist.}, 34\penalty0 (3):\penalty0 1436--1462, 06 2006.
\newblock \doi{10.1214/009053606000000281}.
\newblock URL \url{https://doi.org/10.1214/009053606000000281}.

\bibitem[Moignard et~al.(2015)Moignard, Woodhouse, Haghverdi, Lilly, Tanaka,
  Wilkinson, Buettner, Macaulay, Jawaid, Diamanti, et~al.]{Moignard2015}
V.~Moignard, S.~Woodhouse, L.~Haghverdi, A.~J. Lilly, Y.~Tanaka, A.~C.
  Wilkinson, F.~Buettner, I.~C. Macaulay, W.~Jawaid, E.~Diamanti, et~al.
\newblock Decoding the regulatory network of early blood development from
  single-cell gene expression measurements.
\newblock \emph{Nature biotechnology}, 33\penalty0 (3):\penalty0 269, 2015.

\bibitem[Park and Lord(2007)]{PaL07}
E.~Park and D.~Lord.
\newblock Multivariate poisson-lognormal models for jointly modeling crash
  frequency by severity.
\newblock \emph{Transportation Research Record: Journal of the Transportation
  Research Board}, \penalty0 (2019):\penalty0 1--6, 2007.

\bibitem[{R Development Core Team}(2008)]{R}
{R Development Core Team}.
\newblock \emph{R: A Language and Environment for Statistical Computing}.
\newblock R Foundation for Statistical Computing, Vienna, Austria, 2008.
\newblock URL \url{http://www.R-project.org}.
\newblock {ISBN} 3-900051-07-0.

\bibitem[Ravikumar et~al.(2010)Ravikumar, Wainwright, Lafferty,
  et~al.]{ravikumar2010high}
P.~Ravikumar, M.~J. Wainwright, J.~D. Lafferty, et~al.
\newblock High-dimensional ising model selection using ℓ1-regularized
  logistic regression.
\newblock \emph{The Annals of Statistics}, 38\penalty0 (3):\penalty0
  1287--1319, 2010.

\bibitem[Schwager et~al.(2017)Schwager, Mallick, Ventz, and
  Huttenhower]{Schwager2017}
E.~Schwager, H.~Mallick, S.~Ventz, and C.~Huttenhower.
\newblock A bayesian method for detecting pairwise associations in
  compositional data.
\newblock \emph{PLOS Computational Biology}, 13\penalty0 (11):\penalty0 1--21,
  11 2017.
\newblock \doi{10.1371/journal.pcbi.1005852}.
\newblock URL \url{https://doi.org/10.1371/journal.pcbi.1005852}.

\bibitem[Svanberg(2002)]{svanberg2002class}
K.~Svanberg.
\newblock A class of globally convergent optimization methods based on
  conservative convex separable approximations.
\newblock \emph{SIAM journal on optimization}, 12\penalty0 (2):\penalty0
  555--573, 2002.

\bibitem[Vacher et~al.(2016)Vacher, Tamaddoni-Nezhad, Kamenova, Peyrard,
  Moalic, Sabbadin, Schwaller, Chiquet, Smith, Vallance, Fievet, Jakuschkin,
  and Bohan]{Vacher2016}
C.~Vacher, A.~Tamaddoni-Nezhad, S.~Kamenova, N.~Peyrard, Y.~Moalic,
  R.~Sabbadin, L.~Schwaller, J.~Chiquet, M.~A. Smith, J.~Vallance, V.~Fievet,
  B.~Jakuschkin, and D.~A. Bohan.
\newblock Learning ecological networks from next-generation sequencing data.
\newblock In G.~Woodward and D.~A. Bohan, editors, \emph{Ecosystem Services:
  From Biodiversity to Society, Part 2}, volume~54 of \emph{Advances in
  Ecological Research}, pages 1 -- 39. Academic Press, 2016.
\newblock \doi{https://doi.org/10.1016/bs.aecr.2015.10.004}.
\newblock URL
  \url{http://www.sciencedirect.com/science/article/pii/S0065250415000331}.

\bibitem[Wainwright and Jordan(2008)]{WaJ08}
M.~J. Wainwright and M.~I. Jordan.
\newblock Graphical models, exponential families, and variational inference.
\newblock \emph{Found. Trends Mach. Learn.}, 1\penalty0 (1--2):\penalty0
  1--305, 2008.
\newblock URL \url{http:/dx.doi.org/10.1561/2200000001}.

\bibitem[Yang et~al.(2012)Yang, Allen, Liu, and Ravikumar]{yang2012graphical}
E.~Yang, G.~Allen, Z.~Liu, and P.~K. Ravikumar.
\newblock Graphical models via generalized linear models.
\newblock In \emph{Advances in Neural Information Processing Systems}, pages
  1358--1366, 2012.

\bibitem[Yu et~al.(2015)Yu, Zeng, Wang, Li, and Chen]{Yu2015}
X.~Yu, T.~Zeng, X.~Wang, G.~Li, and L.~Chen.
\newblock Unravelling personalized dysfunctional gene network of complex
  diseases based on differential network model.
\newblock \emph{Journal of translational medicine}, 13\penalty0 (1):\penalty0
  189, 2015.

\bibitem[Yuan and Lin(2007)]{yuan2007}
M.~Yuan and Y.~Lin.
\newblock Model selection and estimation in the gaussian graphical model.
\newblock \emph{Biometrika}, 94\penalty0 (1):\penalty0 19--35, 2007.

\end{thebibliography}


\end{document}
